\newtheorem{assumption}{Assumption}
\newtheorem{lemma}{Lemma}
\newtheorem{theorem}{Theorem}
\newtheorem{remark}{Remark}
\DeclareMathOperator*{\argmin}{arg\,min}
\newcommand\DoToC{%
  \startcontents
  \printcontents{}{1}{\textbf{Contents}\vskip3pt\hrule\vskip5pt}
  \vskip3pt\hrule\vskip5pt
}
\title{Spectral Differential Network Analysis\\for High-Dimensional Time Series}
\author{Michael Hellstern$^1$ \and Byol Kim$^2$ \and Zaid Harchaoui$^3$ \and Ali Shojaie$^1$}
\date{\small
	$^1$Department of Biostatistics, University of Washington \\ \texttt{}\\
	$^2$Department of Statistics, Sookmyung Women's University \\ \texttt{} \\
    $^3$Department of Statistics, University of Washington \\ \texttt{} \\ [2ex]
}
\begin{document}

\maketitle

\begin{abstract}
     Spectral networks derived from multivariate time series data arise in many domains, from brain science to Earth science. Often, it is of interest to study how these networks change under different conditions. For instance, to better understand epilepsy, it would be interesting to capture the changes in the brain connectivity network as a patient experiences a seizure, using electroencephalography data. A common approach relies on estimating the networks in each condition and calculating their difference. Such estimates may behave poorly in high dimensions as the networks themselves may not be sparse in structure while their difference may be. We build upon this observation to develop an estimator of the difference in inverse spectral densities across two conditions. Using an $\ell_1$ penalty on the difference, consistency is established by only requiring the difference to be sparse. We illustrate the method on synthetic data experiments and on experiments with electroencephalography data.
\end{abstract}

\section{INTRODUCTION}

Spectral network analysis of multivariate time series plays a key role in fields ranging from oceanography and seismology to neuroscience \citep{ laurindo2019cross, james2017improved, bloch2022network}. We use the inverse spectral density matrix as our choice of network as the $i,j$ entry more directly encodes the dependence between nodes $i$ and $j$ compared to other networks in the spectral domain such as coherence \citet{dahlhaus2000graphical}. Studying individual networks can provide insights into how features interact; however, it is often of interest to study how networks change across conditions or in response to an external intervention \citep{shojaie2021differential}. In neuroscience, for example, many neurodegenerative disorders are associated with abnormal brain connectivity networks \citep{bloch2022network}. Spectral features are regularly used to interpret many types of neuroscientific data, from electroencephalography to magnetoencephalography data~\citep{NEURIPS2023_21718991,NEURIPS2020_de03beff,NEURIPS2018_64f1f27b}.

Coherence, the frequency domain analog to correlation, is a common choice for investigating interactions in multivariate time series analysis. This notion is especially appealing in neuroscience applications, where activities captured at different frequencies better reveal brain oscillations in sensory-cognitive processes. Therefore, despite the availability of nonlinear association measures, such as mutual information \citep{belghazi2018mutual} and transfer entropy \citep{ursino2020transfer}, coherence is commonly used by neuroscientists to define brain functional connectivity networks. 

Similar to correlation, coherence includes the indirect effects of other nodes in the network. Thus, coherence may not be an informative measure of the dependence between nodes. The inverse spectral density is a more direct measure of dependence between nodes as the inverse spectral density between nodes $i$ and $j$ is a rescaling of the coherence after removing the linear effects of all other nodes $\{ k \neq i , j\}$ \citep{dahlhaus2000graphical}. Therefore, the inverse spectral density better resembles the effective connectivity between brain regions \citep{friston2011functional}, providing an initial understanding of how two regions may be causally related.

Modern data collection methods have facilitated the collection of data,  where the dimensionality is much greater than the number of data points. This setting is often referred to as high-dimensional ($p \gg n$). When analyzing high-dimensional time series data, regularization techniques, such as the LASSO, are needed to make the problem computationally and statistically tractable \citep{banerjee2008model}. In the frequency domain, regularization can allow for better numerical stability and better performance \citep{bohm2009shrinkage}.

While several methods exist to directly estimate the difference in inverse covariance matrices (see \citet{tsai2022joint} and references therein), no such methods exist for the difference in inverse spectral densities. Na\"{i}vely, to estimate the difference in inverse spectral densities, one could first estimate the inverse spectral density in each condition and then take their difference. In high-dimensions, consistency of the resulting estimate requires both networks to be sparse \citep{deb2024regularized}. This assumption may be unrealistic due to the presence of hub nodes \citep{wang2021direct} and thereby lead to degraded statistical performance. 

In this paper, we propose Spectral D-trace Difference (SDD), a direct estimator of the difference in inverse spectral densities, which, to our knowledge, is the first method to directly target a differential network in the spectral domain. By leveraging advances in analysis of time series data, our direct estimator only requires sparsity of the difference, which is a more realistic assumption in many settings  \citep{wang2021direct}. For instance, despite identifying many nonzero coherence values in resting state and stimulation states, \citet{bloch2022network} found many coherence changes to be near zero. These small differences are likely due to noise in the estimation procedure, corresponding to no underlying change in coherence. Assuming sparsity of the difference, and only under a geometrically decaying time dependence condition, we show that our direct difference estimator consistently estimates the true difference. Performance of our estimator is studied in a variety of simulation settings, as well as a real data application.

\paragraph{Notation.} We use $\| A \|_1$, $\| A \|_{F}$, $\|A\|_{\infty}$, $\| A \|_{0}$ to denote the $\ell_1$, Frobenius, infinity, and $\ell_0$ norms of a matrix $A$, respectively. 
The element-wise absolute value of a matrix $A$ is denoted as $|A|$. 
The conjugate transpose of a matrix or vector will be denoted as $A^{H}$. 
We denote the inner product between two matrices $A$ and $B$ as $\langle A, B \rangle = \mathrm{Tr}(AB^T)$ and use $ A \otimes B$ to represent their Kronecker product. 

\section{METHOD}

Figure~\ref{fig_methods} provides a visual explanation of our method, SDD, and each step (A)--(E) is elaborated upon in this section.

\begin{figure}[ht!]
  \centering
  \includegraphics[width=0.6\linewidth]{./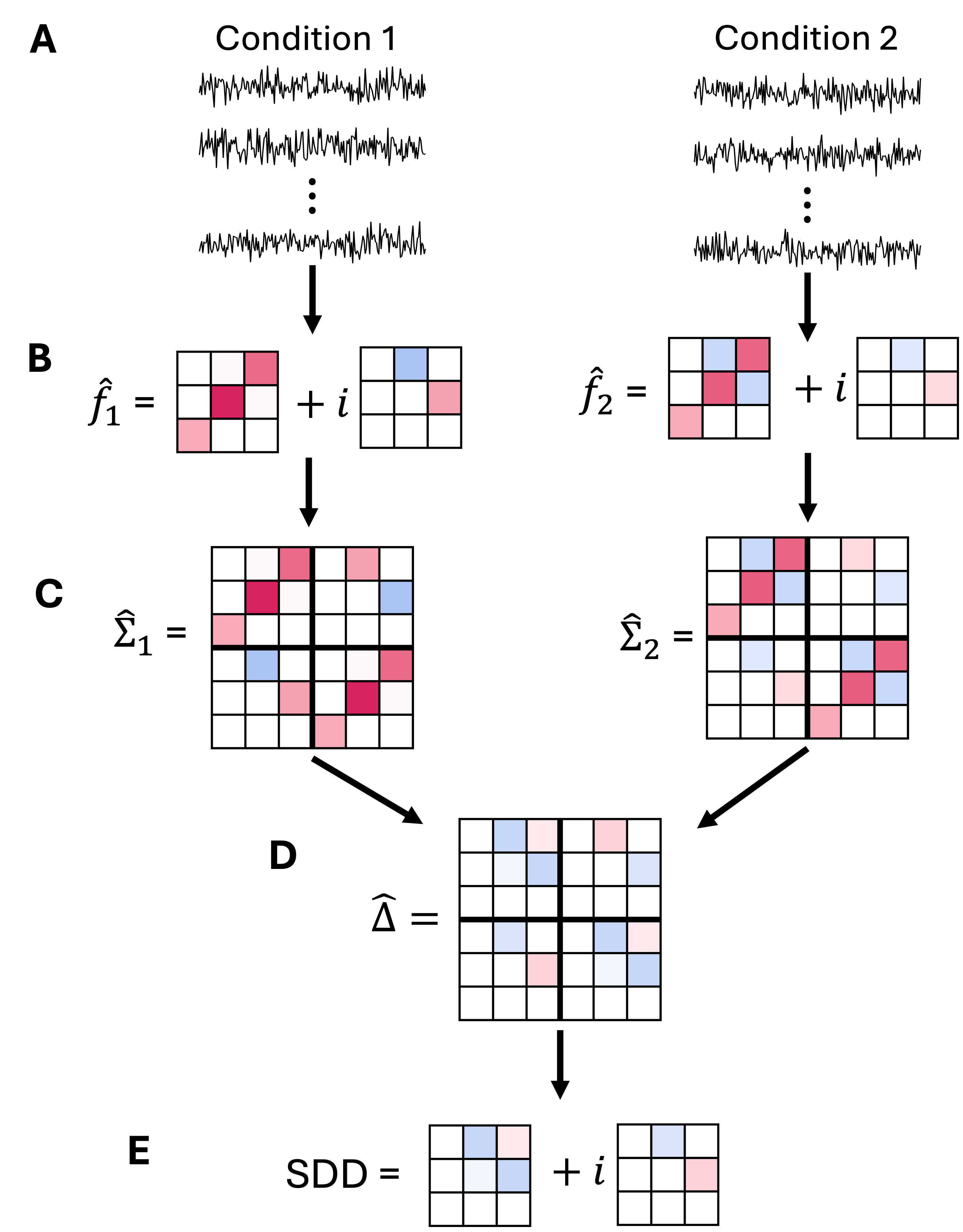}
\caption{SDD Method. \textbf{(A) Inputs:} Time series datasets are observed for two conditions. \textbf{(B) Computing the Spectral Densities:} We then compute the spectral densities in each condition using the smoothed periodogram. \textbf{(C) Expanding to Real Space:} These spectral densities are expanded to the real space. \textbf{(D) Direct Estimation of the Difference in Inverse Spectral Densities:} Using the expanded spectral densities we directly compute the difference estimator. \textbf{(E) Output:} We form the difference in inverse spectral densitites by taking the (1,1) and (2,1) blocks of $\hat{\Delta}$.}
\label{fig_methods}
\end{figure}

\paragraph{(A) Inputs.} Suppose we have mean-zero piecewise stationary data in two conditions, denoted as 1 and 2. In condition 1, there are $n_1$ observations of $p$ covariates, while condition 2 has $n_2$ observations. It is worth noting that the assumption of piecewise stationarity only requires stationarity within each condition and not stationarity across conditions. This may be satisfied in many neuroscience applications where experimental factors affecting the brain state are known from the experimental design. For example, in \citet{bloch2022network}, conditions 1 and 2 might represent pre- and post-stimulation, respectively and stimulation times are known.

\paragraph{(B) Computing the Spectral Densities.} The spectral density in condition $l \in \{1,2\}$ at frequency $\lambda$, $f_l(\lambda)$, is defined as the Fourier transform of the autocovariance matrix. Intuitively, the spectral density represents the portion of the covariance of the signal that can be attributed to the specific frequency $\lambda$ \citep[pp. 331--332]{brockwell1991time}. Mathematically we can write our data as $\{ \mathbf{x}_{l,t} \}$ for $t = 1,\dots, n_l$  where $\mathbf{x}_{l,t}$ is a $p$-dimensional vector in condition $l \in \{1, 2\}$. For each condition $l$, the autocovariance at lag $h \in \mathbb{Z}$ is denoted as $\Gamma_l(h) = \mathbb{E}\left( \mathbf{x}_{l,t} \mathbf{x}^T_{l, t+h} \right)$.  If $\sum_{h=-\infty}^{\infty} \left| \Gamma_l(h) \right| < \infty$ then the spectral density in condition $l$ at frequency $\lambda$ exists and is defined as
\[
f_l(\lambda) = \frac{1}{2\pi} \sum_{h = -\infty}^{\infty} e^{-i\lambda h}\Gamma_l(h), \qquad -\pi \leq \lambda \leq \pi.
\]

To estimate the difference of \emph{inverse} spectral densities, we will require estimates of the spectral densities themselves. We use the smoothed periodogram to estimate the spectral density matrix and refer to it as $\hat{f}_l$. 
To define the smoothed periodogram, we must first define the periodogram. For condition $l$, the periodogram at Fourier frequency $\{\lambda_j = 2 \pi j/n_l, -\lfloor (n_l-1)/2 \rfloor \leq j \leq \lfloor n_l/2 \rfloor \}$ is defined as
\[
P_{l}(\lambda_j) = \frac{\left(\sum_{t=1}^{n_l} \mathbf{x}_{l,t} e^{-i\lambda_j t}\right) \left(\sum_{t=1}^{n_l} \mathbf{x}_{l,t} e^{-i\lambda_j t}\right)^{H}}{2\pi n_l} \, .
\]
Recall $\mathbf{x}^{H}$ is the conjugate transpose of the vector $\mathbf{x}$. Using the Fast Fourier transform (FFT) algorithm, the periodogram can be computed quickly. The periodogram is known to be an inconsistent estimator of the spectral density so a smoothed version is used where the periodograms of $2 M_{n_l}$ nearby frequencies are averaged \citep[pp. 347--350]{brockwell1991time}. $M_{n_l}$ is referred to as the bandwidth or smoothing span. That is, the smoothed periodogram at fourier frequency $\lambda_j$ is generated as
\[
\hat{f}_l(\lambda_j) = \frac{1}{2 M_{n_l} + 1}\sum_{ k = j - M_{n_l}}^{j + M_{n_l}} P_{l}(\lambda_k).
\]

\paragraph{(C) Expanding to the Real Space.} In practice, many different $\lambda$ values are of interest, but for notational simplicity we assume $\lambda$ is fixed throughout and suppress the dependence on $\lambda$. The spectral density, $f_l$, is complex-valued in general which can complicate the estimation.
To transform this problem to the real space,  we note that any complex matrix can be written as the sum of a real matrix and $i$ times another real matrix (for $i^2 = -1$).
Specifically we write the spectral density and its inverse as $f_l = A_l + i B_l$, and $f_{l}^{-1} = \tilde{A}_l + i \tilde{B}_l$,  respectively, where $A_l$, $B_l$, $\tilde{A}_l$, and $\tilde{B}_l$ are real matrices.

With these representations, we can now work in the real space by studying
$
\Sigma_l = \begin{bmatrix} A_l & - B_l \\ B_l & A_l \end{bmatrix} \in \mathbb{R}^{2p \times 2p} 
$ instead of $f_l$. 
%Note that we have $\Sigma_l \in \mathbb{R}^{2p \times 2p}$. 
By Lemma~A.1 of \citet{fiecas2019spectral}, 
\[
\begin{bmatrix} A_l & - B_l \\ B_l & A_l \end{bmatrix} \begin{bmatrix} \tilde{A}_l & - \tilde{B}_l \\ \tilde{B}_l & \tilde{A}_l \end{bmatrix} = \Sigma_l \Sigma_l^{-1} = I_{2p}, 
\]
where $I_{2p}$ represents the $2p \times 2p$ identity matrix. In this representation,  $\Sigma_l^{-1}$ is the expansion of $f_l^{-1}$ to the real space. Moreover, $f_l$ and $f_l^{-1}$ can easily be recovered by taking the (1,1) and (2,1) blocks of $\Sigma_l$ and $\Sigma_l^{-1}$, respectively.

\paragraph{(D) Direct Estimation of the Difference in Inverse Spectral Densities.} 

Having defined the problem in the real space, we use the $\ell_1$ penalized D-trace loss function \citep{yuan2017differential} to directly estimate $\Delta = \Sigma_1^{-1} - \Sigma_2^{-1}$. 

To justify this estimation strategy, in Appendix~\ref{apdx_dtrace_loss} we show that this loss is convex and the population loss is minimized at the true $\Delta = \Sigma_1^{-1} - \Sigma_2^{-1}$. The inputs to this estimator are $\hat{\Sigma}_1$ and $\hat{\Sigma}_2$ which are formed by expanding $\hat{f}_1$ and $\hat{f}_2$ to the real space respectively. Our estimator is 
\begin{equation}
\label{eqn_dtrace_est}
    \begin{split}
    \hat{\Delta} = & \argmin_{\Delta} \text{\hspace{0.125cm}} \frac{1}{4}\left(\langle \hat{\Sigma}_2 \Delta, \Delta \hat{\Sigma}_1 \rangle + \langle \hat{\Sigma}_1 \Delta, \Delta, \hat{\Sigma}_2 \rangle \right)  - \\
    & \langle \Delta, \hat{\Sigma}_2 - \hat{\Sigma}_1 \rangle  + \tau_{n_1, n_2} \|\Delta\|_1 
    \end{split}
\end{equation}
where $\tau_{n_1, n_2} $ is a penalty parameter depending on sample sizes $n_1, n_2$ and $\|\Delta\|_1$ is the $\ell_1$ penalty which returns the sum of the absolute values of all entries of a matrix. We solve for $\hat{\Delta}$ using the alternating direction method of multipliers (ADMM) algorithm from \citet{yuan2017differential}. Since we are inputting matrices $\hat{\Sigma}_i \in \mathbb{R}^{2p \times 2p}$, the computational complexity of each iteration of this algorithm is $O(8p^3)$ while the memory requirements are $O(4p^2)$.

\paragraph{(E) Output.} Recall that the inverse spectral density, $f_l^{-1}$, represents the coherence between two nodes after removing the linear effects of all other nodes. Our goal is to estimate how this effective connectivity differs between the two conditions. That is, we wish to estimate $f_1^{-1} - f_2^{-1}$.  Using our representation in the real space, the true difference $f_{1}^{-1} - f_2^{-1}$ is recovered by taking the (1,1) and (2,1) blocks of
\[
\Sigma_1^{-1} - \Sigma_2^{-1} = \begin{bmatrix}
\tilde{A}_1 & - \tilde{B}_1 \\
\tilde{B}_1 & \tilde{A}_1
\end{bmatrix} - \begin{bmatrix}
\tilde{A}_2 & - \tilde{B}_2 \\
\tilde{B}_2 & \tilde{A}_2
\end{bmatrix} := \Delta.
\]
Similarly, using our estimate $\hat{\Delta}$ of $\Delta$ we can estimate $f_{1}^{-1} - f_{2}^{-1}$ using the (1,1) and (2,1) blocks of $\hat{\Delta}$. 

In the following section, we study the consistency of our estimator. Intuitively, if $\hat{f}_l$ estimates the spectral density well, we should be able to estimate $\Delta$ well. Our theoretical results show that this is indeed the case and the rate at which the smoothed periodogram $\hat{f}_{l}$ converges to the true periodogram is also the rate at which $\hat{\Delta}$ converges to $\Delta$. Due to the dependence in the data, the smoothed periodogram and $\hat{\Delta}$ are not $\sqrt{n}$ consistent as one might expect in the i.i.d case. Rather, we show that for an optimal choice of smoothing span, $\hat{f}_l$ and $\hat{\Delta}$ converge to $f_l$ and $\Delta$ respectively at a rate of $n^{-1/3}$.

\section{THEORY}\label{sec:theory}
In this section, we study the theoretical properties of our SDD estimator, $\hat{\Delta}$. Specifically, we establish the convergence rate of $\hat{\Delta}$ to the population quantity $\Delta$. From  Equation~\ref{eqn_dtrace_est}, our estimator $\hat{\Delta}$ relies on expanded estimates of the spectral density $\hat{\Sigma}_1, \hat{\Sigma}_2$. Thus, it is natural that the rate of convergence of $\hat{\Delta}$ to $\Delta$ relies on the rates of convergence of $\hat{f}_1$ and $\hat{f}_2$ to $f_1$ and $f_2$.

\begin{theorem} \label{thm_consistency}
     Suppose Assumption~\ref{apdx_assumption_fiecas_dependence} in Appendix~\ref{apdx_functional_dependence} holds and that $\tau_{n_1, n_2} \geq O(\min(n_1,n_2)^{-1/3})$ and $M_{n_l} = O(n_l^{2/3})$. For $n_1$ and $n_2$ large enough and for any $H > 0$, there exists some $C > 0$ such that with probability greater than $1 - Cp^2\min(n_1,n_2)^{-H}$, 
\[
    \| \hat{\Delta} - \Delta \|_{F} \leq O\left(s_{\Delta} \min(n_1,n_2)^{-1/3}\right),
\]
where $s_{\Delta}$ is the number of non-zero elements in the true difference matrix $\Delta$.
\end{theorem}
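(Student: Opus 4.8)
The plan is to view $\hat\Delta$ as the minimizer of a penalized $M$-estimation problem with a quadratic loss and to run the standard two-ingredient argument: a high-probability entrywise bound on the error of the inputs $\hat\Sigma_1,\hat\Sigma_2$, and restricted strong convexity of the D-trace quadratic form combined with the optimality of $\hat\Delta$. Write $\delta_n := \max_{l\in\{1,2\}}\|\hat\Sigma_l - \Sigma_l\|_\infty$ and $S := \mathrm{supp}(\Delta)$, $|S|=s_\Delta$. The first task, and the step I expect to be the main obstacle, is to show that under Assumption~\ref{apdx_assumption_fiecas_dependence} and with $M_{n_l}=O(n_l^{2/3})$, for every $H>0$ there is a $C>0$ with
\[
\mathbb P\bigl(\delta_n \le C\min(n_1,n_2)^{-1/3}\bigr) \ge 1 - Cp^2\min(n_1,n_2)^{-H}.
\]
This is where the time-series structure enters. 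For each condition the smoothed periodogram $\hat f_l(\lambda)$ decomposes into a bias term of order $M_{n_l}/n_l$ --- controlled by Lipschitz continuity in $\lambda$ of the entries of $f_l$, which follows from the geometric decay of the autocovariances, after averaging over the $2M_{n_l}+1$ neighbouring Fourier frequencies --- and a stochastic term whose entrywise fluctuations are of order $\sqrt{\log p/M_{n_l}}$ up to polynomially small tail mass, the latter coming from a concentration bound for quadratic forms of the weakly dependent data (of Nagaev/Hanson--Wright type, in the spirit of the functional-dependence arguments behind Lemma~A.1 of \citet{fiecas2019spectral}) together with a union bound over the $O(p^2)$ entries; the choice $M_{n_l}\asymp n_l^{2/3}$ balances the two at the rate $\min(n_1,n_2)^{-1/3}$. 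Since the real-space expansion $\Sigma_l = \begin{bmatrix} A_l & -B_l \\ B_l & A_l\end{bmatrix}$ merely rearranges the real and imaginary parts of $f_l$, one has $\|\hat\Sigma_l - \Sigma_l\|_\infty \le \|\hat f_l - f_l\|_\infty$, so the bound transfers to the expanded matrices, $\Sigma_l$ is symmetric positive definite with eigenvalues bounded away from $0$ and $\infty$, and $\hat\Sigma_l$ is symmetric positive semidefinite. The remainder of the proof is carried out on the event $\{\delta_n \le C\min(n_1,n_2)^{-1/3}\}$.

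Next, let $L$ denote the D-trace loss of Equation~\ref{eqn_dtrace_est} without the penalty, $L_0$ its population analogue with $\hat\Sigma_l$ replaced by $\Sigma_l$, and $\mathcal Q_{\hat\Sigma}(D):=\tfrac14\bigl(\langle\hat\Sigma_2 D,D\hat\Sigma_1\rangle+\langle\hat\Sigma_1 D,D\hat\Sigma_2\rangle\bigr)$ the pure-quadratic part of $L$; Appendix~\ref{apdx_dtrace_loss} gives $\nabla L_0(\Delta)=0$ at the true $\Delta=\Sigma_1^{-1}-\Sigma_2^{-1}$. For symmetric positive definite $\Sigma_1,\Sigma_2$ one has the identity $\mathcal Q_\Sigma(D)=\tfrac14\bigl(\|\Sigma_2^{1/2}D\Sigma_1^{1/2}\|_F^2+\|\Sigma_1^{1/2}D\Sigma_2^{1/2}\|_F^2\bigr)\ge\kappa\|D\|_F^2$ with $\kappa$ determined by the eigenvalue bounds on $\Sigma_1,\Sigma_2$. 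Writing $\hat D:=\hat\Delta-\Delta$, optimality of $\hat\Delta$ and the Taylor expansion of the quadratic $L$ give the basic inequality $\mathcal Q_{\hat\Sigma}(\hat D)\le-\langle\nabla L(\Delta),\hat D\rangle+\tau_{n_1,n_2}\bigl(\|\Delta\|_1-\|\hat\Delta\|_1\bigr)$. Because $\nabla L_0(\Delta)=0$, the matrix $\nabla L(\Delta)=\nabla L(\Delta)-\nabla L_0(\Delta)$ is a sum of products in which one $\Sigma$-factor is replaced by its estimate, so the boundedness conditions in Assumption~\ref{apdx_assumption_fiecas_dependence} (bounded eigenvalues of $f_l$ and bounded $\|\Delta\|_1$) yield $\|\nabla L(\Delta)\|_\infty\le c\,\delta_n$; hence the prescribed $\tau_{n_1,n_2}\ge O(\min(n_1,n_2)^{-1/3})$ dominates $\|\nabla L(\Delta)\|_\infty$, and the usual decomposition over $S$ places $\hat D$ in the cone $\|\hat D_{S^c}\|_1\le3\|\hat D_S\|_1$, so that $\|\hat D\|_1\le4\sqrt{s_\Delta}\,\|\hat D\|_F$.

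Finally, on this cone the perturbation of the quadratic form is controlled entrywise, $|\mathcal Q_{\hat\Sigma}(\hat D)-\mathcal Q_\Sigma(\hat D)|\le c'\delta_n\|\hat D\|_1^2\le16c'\delta_n s_\Delta\|\hat D\|_F^2$, which is $o(\|\hat D\|_F^2)$ once $n_1,n_2$ are large enough --- this is the $s_\Delta\delta_n=o(1)$ requirement absorbed into ``$n_1,n_2$ large enough'' --- so $\mathcal Q_{\hat\Sigma}(\hat D)\ge\tfrac{\kappa}{2}\|\hat D\|_F^2$. Substituting into the basic inequality together with $\|\hat D\|_1\le4\sqrt{s_\Delta}\|\hat D\|_F$ and $\tau_{n_1,n_2}=O(\min(n_1,n_2)^{-1/3})$ gives $\tfrac{\kappa}{2}\|\hat D\|_F^2\le c''\tau_{n_1,n_2}\sqrt{s_\Delta}\|\hat D\|_F$, hence $\|\hat D\|_F\le O\bigl(\sqrt{s_\Delta}\min(n_1,n_2)^{-1/3}\bigr)$, from which the stated (slightly looser) bound $\|\hat\Delta-\Delta\|_F\le O\bigl(s_\Delta\min(n_1,n_2)^{-1/3}\bigr)$ follows, e.g.\ via $\|\hat D\|_F\le\|\hat D\|_1$. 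Since $f_1^{-1}-f_2^{-1}$ is recovered from the $(1,1)$ and $(2,1)$ blocks of $\Delta$ (and $\hat\Delta$), the same Frobenius rate transfers to the estimated differential network, and all statements hold on the event of the first step, i.e.\ with probability at least $1-Cp^2\min(n_1,n_2)^{-H}$.
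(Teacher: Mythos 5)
Your proposal is correct and follows essentially the same route as the paper: a high-probability entrywise bound on $\hat{\Sigma}_l - \Sigma_l$ at rate $\min(n_1,n_2)^{-1/3}$ (the paper gets this by invoking Theorem~3.1 of \citet{fiecas2019spectral} and optimizing the bandwidth, rather than re-deriving the bias--variance trade-off), followed by the decomposability/restricted-strong-convexity argument of \citet{negahban2012unified} with the same cone $\|\hat D_{S^c}\|_1 \le 3\|\hat D_S\|_1$, the same gradient bound $\|\hat\Gamma-\Gamma\|_\infty\|\Delta\|_1 + \|\hat\Sigma_1-\Sigma_1\|_\infty+\|\hat\Sigma_2-\Sigma_2\|_\infty$, and the same $16 s_\Delta \delta_n$ perturbation of the Hessian quadratic form. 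Your sharper intermediate rate $\sqrt{s_\Delta}\min(n_1,n_2)^{-1/3}$ also matches the paper's final display $\sqrt{18 s_\Delta}\,\tau_{n_1,n_2}/(\lambda_{\mathrm{min}}(f_1)\lambda_{\mathrm{min}}(f_2))$, of which the stated theorem is a looser corollary.
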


The full proof of Theorem~\ref{thm_consistency} is given in Appendix~\ref{apdx_proof_consistency}. To establish convergence rates of $\hat{f}_1$ and $\hat{f}_2$ in high-dimensions, we rely in part on Theorem~3.1 of \citet{fiecas2019spectral}, which uses the functional dependency framework of \citet{wu2005nonlinear} and \citet{wu2018asymptotic}. Specifically, our convergence rates hinge on the assumption of a geometrically decaying dependence in the data. An overview of the functional dependence framework is given in Appendix~\ref{apdx_functional_dependence} and a formal statement of the geometric decay assumption is given in Assumption~\ref{apdx_assumption_fiecas_dependence} in Appendix~\ref{apdx_functional_dependence}.  The geometric decay assumption is mild and is satisfied by many processes including ARMA, ARMA-ARCH, and ARMA-GARCH processes, as well as other nonlinear autoregressive processes \citep{shao2007asymptotic, liu2010asymptotics}. 

While our method uses a similar setup to \citet{yuan2017differential}, the analysis of our estimator is fundamentally different in two ways.  First, we establish Theorem~\ref{thm_consistency} only assuming sparsity of the difference and we explicitly avoid the irrepresentability assumption of \citet{yuan2017differential} which has been shown not to hold in practice \citep{zhao2006model}. For example, \citet{wang2021direct} show through simulations that for graphs of moderate or larger dimensions, this condition almost never holds. This is true even for very sparse differences and weak data dependence. Second, in contrast to the proof of \citet{yuan2017differential}, which assumes i.i.d. observations, our proofs also leverage advances in theoretical analysis of high-dimensional time series to handle the temporal dependence among observations over time. It is also worth noting that Theorem~\ref{thm_consistency} also applies to differences in inverse spectral densities across frequencies. %and examples 1 and 2 of \citet{liu2010asymptotics}

We now discuss interesting intermediate results from the proof. In Lemma~\ref{lemma_conc_f}, we show that, with high probability and a smoothing span of $M_{n_l} = O(n_l^{2/3})$, the deviations of the smoothed periodogram from the true spectral density in condition $l$ are $O(n_{l}^{-1/3})$. Lemma~\ref{lemma_conc_hessian} shows that with high probability, the estimated Hessian of the D-trace loss converges to the true Hessian at a rate of $O(\min(n_1,n_2)^{1/3})$. That is, $\|\hat{\Gamma} - \Gamma \|_{\infty} = O(\min(n_1,n_2)^{1/3})$ where $\hat{\Gamma} = 0.5 \left( \hat{\Sigma}_1 \otimes \hat{\Sigma}_2 + \hat{\Sigma}_2 \otimes \hat{\Sigma}_1 \right)$ and $\Gamma$ is the same but without the hats. Lastly, Equation~\ref{eqn_dtrace_deriv_norm} establishes that, with high probability, the maximum norm of deviations of the derivative of the D-trace loss, $\left\|0.5\left( \hat{\Sigma}_1 \Delta \hat{\Sigma}_2 + \hat{\Sigma}_2 \Delta \hat{\Sigma}_1 \right) - \left( \hat{\Sigma}_1 - \hat{\Sigma}_2 \right)\right\|_{\infty}$, is $O(\min(n_1,n_2)^{-1/3})$. 

Due to the dependence in the data, the convergence rate in Theorem~\ref{thm_consistency} is slower than the $n^{-1/2}$ rates for i.i.d data. This is because, compared with parametric models that amount to fast decaying  dependence among observations, our dependence assumption is very mild and general. 
In fact, as we choose $M_n$ to minimize the deviations in Theorem~3.1 of \citet{fiecas2019spectral}, our rate is the best we can expect to achieve using this general dependence framework when only assuming a
geometrically decaying time dependence. 
%Appendix~\ref{apdx_proof_consistency} contains more details, as well as the full proof and statement of the results with full specification of constants and details. 

\section{SIMULATIONS}\label{sec:simulations}
In this section, we evaluate the performance of the proposed estimator $\hat{\Delta}$ using simulation studies. The code to reproduce these results as well as those in Section~\ref{sec:eeg} are available at \href{https://github.com/mikehellstern/spectral-differential-networks}{\color{blue}{https://github.com/mikehellstern/spectral-differential-networks}}. All simulations use VAR$(1)$ processes, with $p = 54$, as this allows us to compute the true spectral density using results from \citet{sun2018large}. We control the sparsity of the difference in inverse spectral densities by using a block diagonal structure. In all settings, the transition matrix in condition~1 is the same as that in condition~2 except the last $3 \times 3$ block is multiplied by $-1$. The first setting uses similar coefficients as in \citet{sun2018large}. In the second and third simulation settings, the transition matrix consists of one $51 \times 51$ block and one $3 \times 3$ block. The larger block was generated with 60\% and 95\% sparsity in the second and third settings, respectively. All simulations were performed using $n = 100, 200, 500, 1000, 2000$ observations for both conditions. Smoothed periodograms were computed from data for each condition using a smoothing window of $M_n = \lceil n^{2/3} \rceil$ and then converted to the real space to generate $\hat{\Sigma}_l$. All computations were performed on a MacBook with a 2.3 GHz Quad-Core Intel Core i7 processor and 16 GB of memory. More information on the simulation setup, optimization techniques, and tuning parameter selection can be found in Appendix~\ref{apdx_sim}.

\begin{figure}[ht!]
    \centering
    \includegraphics[width=0.5\linewidth]{./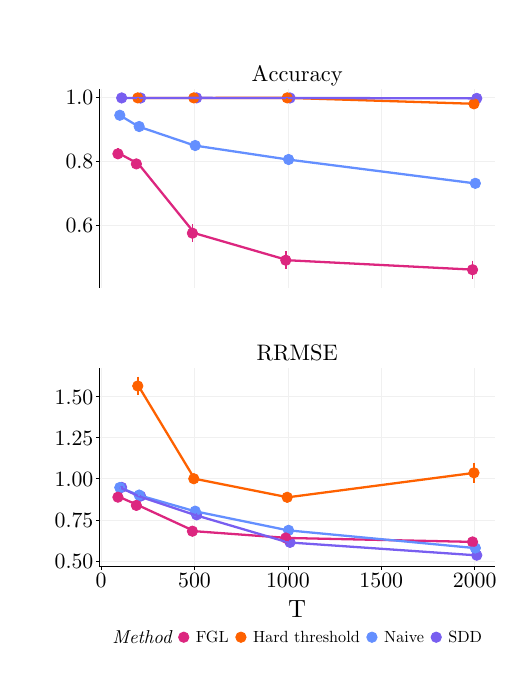}  
    \caption{Simulation 1. Results are reported as mean (dots) and SE (vertical lines) where the mean and SE are taken across all frequencies for a given sample size $n$.}
    \label{fig_sim1_acc_rrmse}
\end{figure}

As previously noted, our method is the first to directly estimate the difference in inverse spectral densities and no direct competitors or state-of-the-art methods exist. However, in this section we will discuss and compare several alternative methods. While not focused on directly estimating the difference, a related method is the joint graphical lasso with a fusion penalty \citep[FGL,][]{danaher2014joint}. This method aims to simultaneously estimate inverses in each condition that are believed to share structural similarities. To encourage similarities in the inverse estimates, FGL uses an $\ell_1$ penalty on their difference. In simulations, FGL was tuned using AIC from \citet{danaher2014joint}. We have also conceptualized two additional methods to compare to SDD. The first is a Na\"ive method, which estimates $\Delta$ by taking the difference after estimating individual inverse spectral densities separately, $\hat{\Delta}_{\mathrm{N}} = \hat{\Sigma}_1^{-1} - \hat{\Sigma}_2^{-1}$. For the Na\"ive method, the graphical LASSO \citep[GLASSO,][]{friedman2008sparse} was used to estimate a sparse inverse spectral density in each condition. Specifically we used the fast implementation from \citet{sustik2012glassofast} available in the \texttt{glassoFast} package in \texttt{R} (GPL ($\geq$ 3.0)). To induce sparsity in individual estimates and (potentially) their difference, GLASSO was tuned for each condition separately using eBIC with $\gamma = 0.5$ \citep{foygel2010extended}. The next method directly thresholds the difference in inverse spectral densities to induce sparse estimates. Specifically, we estimate the inverse spectral density in each condition and take the difference and then use Hard thresholding. The thresholds were tuned using the same eBIC as SDD given in (\ref{eqn_sdd_ebic}). Similar to \citet{deb2024regularized}, all methods are compared based on the following metrics: Accuracy, Precision, Recall, and Relative Root Mean Square Error (RRMSE). The full definitions can be found in Appendix~\ref{apdx_sim}.  

\begin{figure}[ht!]
    \centering
      \includegraphics[width=0.5\linewidth]{./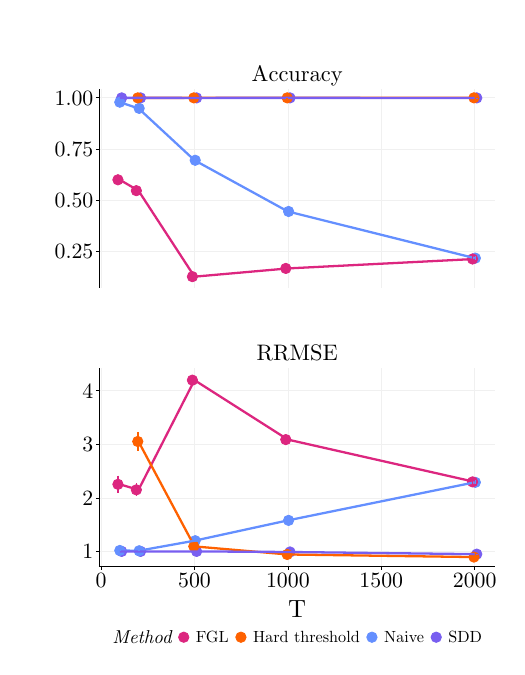} 
    \caption{Simulation 2. Results are reported as mean (dots) and SE (vertical lines) where the mean and SE are taken across all frequencies for a given sample size $n$.}
    \label{fig_sim2_acc_rrmse}
\end{figure}

Simulation results for Accuracy and RRMSE in simulations 1 and 2 are reported in Figures~\ref{fig_sim1_acc_rrmse} and \ref{fig_sim2_acc_rrmse}. Results for Precision and Recall for simulations 1 and 2 can be found in Appendix~\ref{apdx_sim} Figures~\ref{fig_sim1_prec_reca} and \ref{fig_sim2_prec_reca}. Results from simulation 3  are in Appendix~\ref{apdx_sim}, Figures~\ref{fig_sim3_acc_rrmse} and \ref{fig_sim3_prec_reca}.  The full set of results for each of the simulations are in Appendix \ref{apdx_sim} Tables~\ref{tbl_sim1}, \ref{tbl_sim2}, \ref{tbl_sim3}. In general, SDD has better accuracy than both FGL and the Na\"ive method with the difference in accuracy increasing as the sample size increases. 
This is likely due to the fact that, in general, SDD estimates much fewer edges than either method. This also results in fewer false positive edges but more false negatives, resulting in better precision for SDD at the expense of reduced recall; see Table~\ref{tbl_sim1} in Appendix~\ref{apdx_sim}. Across all settings and sample sizes we also see that SDD generally has either lower or similar RRMSE compared to other methods. Overall, compared to FGL and the Na\"ive method, SDD identifies a similar number of true edges and achieves similar or lower RRMSE using a much smaller edge set. Compared to Hard thresholding, SDD achieves much better RRMSE and recall across all settings and sample sizes except setting 2 for large $n$ where Hard thresholding slightly outperforms SDD in terms of RRMSE. To further support our analysis, we have also summarized the difference between SDD and each of the competing methods for each metric in Appendix~\ref{apdx_sim}, Tables~\ref{apdx_tbl_sim1_diff_sdd_naive},~\ref{apdx_tbl_sim1_diff_sdd_hard},~\ref{apdx_tbl_sim1_diff_sdd_soft},~\ref{apdx_tbl_sim2_diff_sdd_naive},~\ref{apdx_tbl_sim2_diff_sdd_hard},~\ref{apdx_tbl_sim2_diff_sdd_soft},~\ref{apdx_tbl_sim3_diff_sdd_naive},~\ref{apdx_tbl_sim3_diff_sdd_hard},~\ref{apdx_tbl_sim3_diff_sdd_soft}.

\section{APPLICATION TO EEG DATA}

\label{sec:eeg}

We apply our direct difference estimator (SDD) and competing estimators to electroencephalograms (EEG) data from \citet{hatlestad2022bids}. Data were recorded with a 64 channel EEG array for 111 healthy subjects at a sampling frequency of 1024 Hz. Four minutes of brain activity was recorded while subjects were resting with their eyes closed. For 42 subjects, a second session was recorded 2--3 months after the initial session. For our analysis, we used the pre-cleaned data provided in OpenNeuro Dataset ds003775 \citep{markiewicz2021openneuro}. Specific cleaning steps can be found in \citet{hatlestad2022bids}. We also downsampled the data to 512 Hz.

To validate our method, we analyzed the subjects with follow-up and without follow-up separately. For those with follow-up, we estimated the difference in networks from the first to the second session (\emph{across session analysis}). For those without, we estimated the difference in networks from 0-60s to 120-180s (\emph{within session analysis}). The 60-120s block was used as a rest. It has been shown that brain networks are temporally dynamic \citep{zalesky2014time, nobukawa2019changes}. Therefore, \textit{a priori}, we expect the differential networks to be sparser in the within session analysis compared to the across session analysis and finding such results would provide validation of our method.

We compare the sparsity of the estimated differential network for SDD, FGL, the Na\"{i}ve, and Hard threshold methods across the commonly-used Theta, Beta, Gamma and High-Gamma bands. Sparsity was defined as the proportion of entries in the estimated difference that were non-zero. Five or six evenly spaced frequencies were considered for each band. They were (in Hz): Theta: (4,5,6,7,8), Beta: (12, 16, 20, 24, 28), Gamma: (30, 40, 50, 60, 70), High-gamma: (80, 95, 110, 125, 140, 150). Results  averaged over subjects and frequencies within each band are presented in Figures~\ref{fig_eeg_within} and ~\ref{fig_eeg_across}.

\begin{figure}[ht!]
    \centering
      \includegraphics[width=0.5\linewidth]{./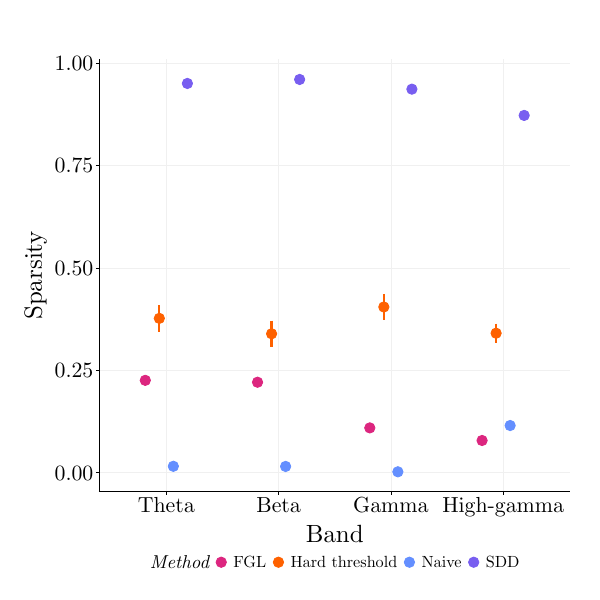} 
    \caption{EEG Within Session Analysis. Vertical lines indicate SE.}
    \label{fig_eeg_within}
\end{figure}

\begin{figure}[ht!]
    \centering
      \includegraphics[width=0.5\linewidth]{./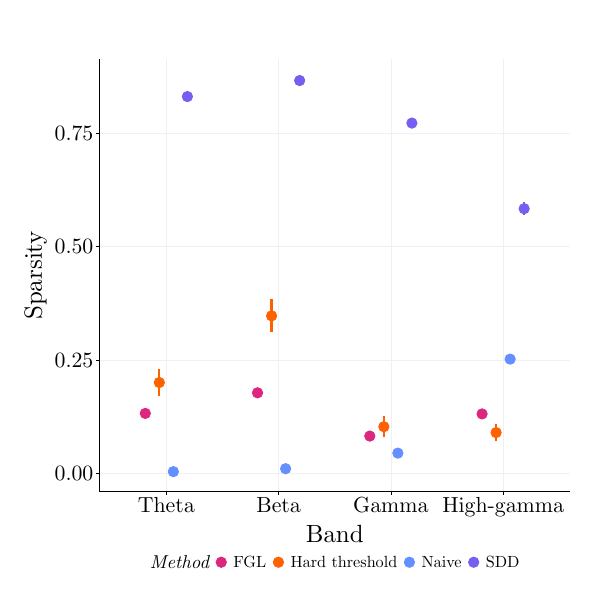} 
    \caption{EEG Across Session Analysis. Vertical lines indicate SE.}
    \label{fig_eeg_across}
\end{figure}

Figures~\ref{fig_eeg_within} and ~\ref{fig_eeg_across} shows that SDD is the only method that estimates a sparser difference in the within session analysis compared to the across session analysis across all frequency bands. On the other hand, competing methods have similar or less sparsity in the within session analysis compared to the across-session analysis. These results clearly indicate that SDD  estimates are more consistent with the experimental setting in this application. 

\section{APPLICATION TO SIMULATED \texorpdfstring{$\boldsymbol{\mu}\mathrm{\textbf{ECoG}}$}{\textbf{uECoG}} DATA}
\label{sec:uecog}

We next apply our SDD estimator to simulated micro-electrocorticography ($\mu\mathrm{ECoG}$) data. The simulation design is based on the optogenetic stimulation $\mu\mathrm{ECoG}$ experiments from \citet{yazdan2016large, yazdan2018targeted, bloch2022network}. Studying how stimulation changes the brain network using such experiments can help design protocols to change the brain network in a targeted way. This can lead to treatments for conditions,  such as schizophrenia or epilepsy, which are known to stem from abnormal brain connectivity networks \citep{bloch2022network}. Our simulations used a similar setup to \citet{yazdan2016large}. Specifically, a total of 32 experimental and 4 control sessions were simulated. Using the same session types as \citet{bloch2022network}, of the 32 experimental sessions, 23 were generated using a simulated 10ms stimulation delay and 9 were generated with a 100ms delay. Each experimental session consisted of five 10-minute stimulation blocks (stimulation state) with a 5-minute recording block (resting state)  before and after. Thus, each experimental session alternated between resting state and stimulation state blocks consisting of 6 total resting state blocks and 5 total stimulation state blocks. We simulated data using a sampling frequency of 500Hz.

Data in all sessions was simulated using a VAR(1) process with i.i.d standard normal errors. It is worth noting that due to the autoregressive nature of the process, autocovariance matrices between variables will not be diagonal in general despite using a diagonal covariance matrix for the errors \citep{lutkepohl2005new}. Since the experiments in \citet{yazdan2016large} included a varying number of good electrodes out of the 96 total electrodes, we randomly selected between $p = 56$ and $p = 94$ good electrodes for each session. Coefficients in a base transition matrix ($A_1 \in \mathbb{R}^{p \times p}$) were generated by randomly sampling from a standard normal distribution. 
To obtain a stable time series, the matrix $A_1$ was modified to have maximum eigenvalue $< 1$. Similar to \citet{yazdan2016large}, two electrodes in each session were randomly chosen to be stimulated. The effect of stimulation was modeled by changing the coefficient in $A_1$ between simulation electrodes which is denoted as $A_{\mathrm{stim}}$. For control sessions, there was no stimulation so $A_1 = A_{\mathrm{stim}}$. The effect of a 10ms and 100ms stimulation delay was simulated by changing the coefficient between stim sites to be $2 \max(|A_1|)$ and $0.5 \max(|A_1|)$, respectively. In this way, we simulate a weaker effect of the 100ms stimulation delay \citep{bloch2022network}.  It is worth noting that, in contrast to the simulations in Section~\ref{sec:simulations}, the transition matrices are dense with no block structure. Thus, it is not immediately clear how the true spectral density changes as the transition matrices change.

During stimulation blocks, data were simulated using the corresponding $A_{\mathrm{stim}}$ matrix. To simulate a lasting, but weakened effect of stimulation on the following resting state, resting state blocks were simulated using a progressively increasing effect of stimulation. Specifically, resting state block 1 uses $A_1$ since no stimulation has occurred. The effect of stimulation for resting state blocks 2 through 6 linearly increases from $\approx0.05$ to $\approx0.6$ times the stimulation effect. For example, for a 10ms delay session, resting state block 2 was simulated using a VAR(1) process with transition matrix equal to $A_1$ where the coefficient between stimulation sites was replaced with $\approx0.05 * 2\max(|A_1|)$.

Using our simulated data, we study how stimulation changes the connectivity between stimulation sites. However, instead of using a na\"{i}ve estimator of the connectivity changes as in \citet{bloch2022network}, we use SDD. Using SDD, we estimate the change in inverse spectral densities from resting state to the following resting state (RS to RS) and from resting state to the following stimulation state (RS to SS). For each session there are 5 RS to RS differences (RS1 to RS2, RS2 to RS3, \dots, RS5 to RS6) and 5 RS to SS difference (RS1 to SS1, \dots, RS5 to SS5). Note that an estimated SDD coefficient of zero indicates no change while non-zero coefficients indicate a change. This could be either a change in magnitude of the connection, a new connection, or a lost connection. The frequencies we study in each band are the same as in Section~\ref{sec:eeg}. Results are averaged over all sessions, frequencies within each band, delay (control, 10ms, or 100ms) and RS to RS or SS to SS.

\begin{figure}[ht!]
  \centering
  \includegraphics[width=0.5\linewidth]{./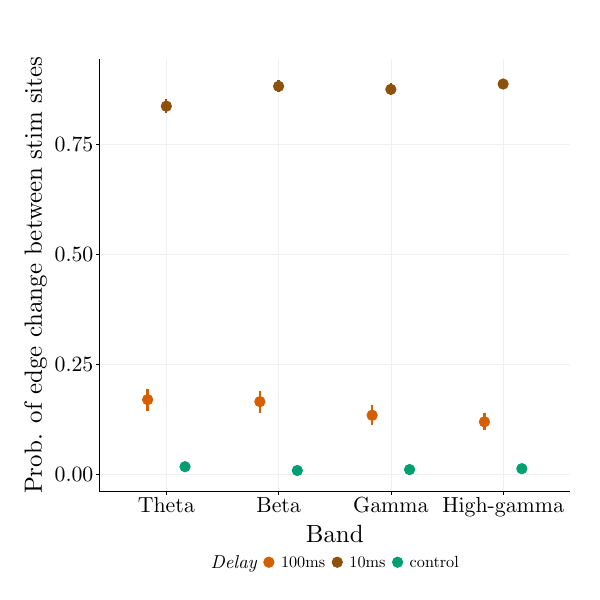}  
\caption{RS to RS analysis. Probability of Change in Edges Between Stimulation Sites by Frequency Band and Delay. Vertical bars indicate standard errors.}
\label{fig_prob_edge_stim_sites_rs_to_rs}
\end{figure}

\begin{figure}[ht!]
  \centering
  \includegraphics[width=0.5\linewidth]{./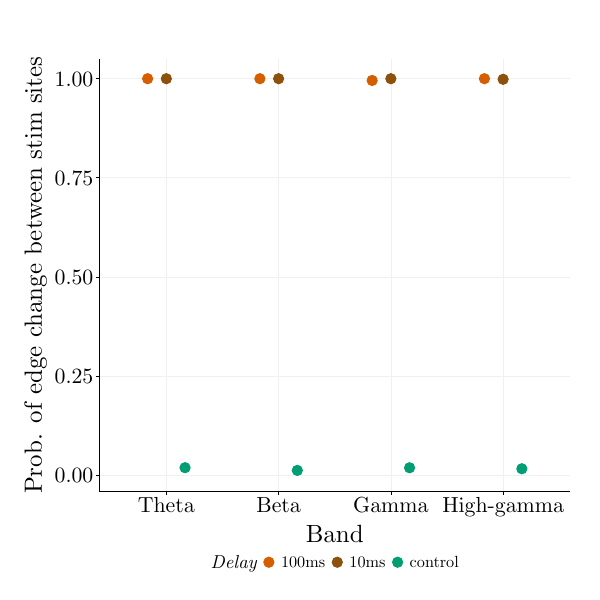}  
\caption{RS to SS analysis. Probability of Change in Edges Between Stimulation Sites by Frequency Band and Delay. Vertical bars indicate standard errors.}
\label{fig_prob_edge_stim_sites_rs_to_ss}
\end{figure}

To study how stimulation changes the edges between stimulation sites, we compute the probability of edge change between the stimulation sites for each band, delay, and RS to RS or RS to SS. The RS to RS and RS to SS results are displayed in Figures~\ref{fig_prob_edge_stim_sites_rs_to_rs} and~\ref{fig_prob_edge_stim_sites_rs_to_ss} respectively. In the RS to SS analysis, stimulation with a 10ms or 100ms delay is more likely to induce a change in connectivity between the stimulation sites across all frequency bands compared to control sessions. After stimulation ends (RS to RS analysis), this connectivity change is attenuated, but still persists. This is in line with our simulation framework and prior work which showed a change in functional connectivity between stimulation sites in the stimulation state which persisted into the following resting state \citep{bloch2022network}.

\section{DISCUSSION}\label{sec:discussion}

The SDD estimator was compared to the joint graphical lasso with a fusion penalty \citep[FGL,][]{danaher2014joint}, the the Na\"ive method of estimating sparse inverse the spectral densities and taking their difference, and directly hard thresholding the difference. The four methods were compared across three different simulation settings, where the true difference in inverse spectral densities was sparse. Comparison metrics included precision, recall, accuracy, and relative root mean square error (RRMSE). When compared to the other methods, SDD performed better in accurately identifying edges and non-edges and the SDD estimates had a lower relative root mean square error. Using EEG data we further validated SDD by showing it estimated sparser differences between networks that were closer in time, in line with expected behavior. %

While consistency of SDD is established in Theorem~\ref{thm_consistency}, this result does not quantify the degree of uncertainty around these estimates. A particularly interesting future direction is developing methods for uncertainty quantification, such as confidence intervals and hypothesis testing. These may be important when the relationship between two brain regions or nodes in the graph is of primary interest or when the sample size is small and noise may be large relative to the signal. 

\section{CONCLUSION}\label{sec:conclusion}
We proposed a direct estimate of the differences in inverse spectral densities between two conditions, termed the Spectral D-trace Difference (SDD) estimator. By using the direct difference estimator, we only need to assume sparsity in the difference of inverse spectral densities. Compared to the usual assumptions of sparsity of each inverse spectral density, the sparsity of the difference is more realistic if, for example, the inverse spectral densities do not change much between conditions. This is indeed what we expect in many biological settings, especially neurodegenerative disorders that are associated with changes in brain connectivity. Convergence rates of our estimator to the true difference were derived using only an assumption on decaying dependence within the time series.

\subsection*{Acknowledgements}

This work was partially supported by U.S. NIH grants R01-MH125429 and R01-GM133848.

%\subsubsection*{References}

\bibliographystyle{plainnat}
\bibliography{references}

\clearpage
\appendix
\section*{Appendix}
\DoToC
\clearpage

\section{ADDITIONAL DETAILS FOR METHODS}

\subsection{D-trace Loss}
\label{apdx_dtrace_loss}
To estimate $\Delta$, we use the $\ell_1$ penalized D-trace loss function from \citet{yuan2017differential}:
\[
L_{D}\left( \Delta, \Sigma_2, \Sigma_1 \right) = \frac{1}{4}\left(\langle \Sigma_2 \Delta, \Delta \Sigma_1 \rangle + \langle \Sigma_1 \Delta, \Delta \Sigma_2 \rangle \right)  - \langle \Delta, \Sigma_2 - \Sigma_1 \rangle \,
\] 
where $\langle A, B \rangle = \mathrm{Tr}(AB^T)$. Similar to \citet{yuan2017differential},  we can take the derivative with respect to $\Delta$ to see that $\Sigma_1^{-1} - \Sigma_2^{-1}$ minimizes the D-trace loss. 

More specifically, let 
\begin{equation} \label{eqn_dtrace_derivative}
\frac{\partial{L_D}}{\partial{\Delta}} = \left( \Sigma_2 \Delta \Sigma_1 + \Sigma_1 \Delta \Sigma_2 \right)/2 - \left( \Sigma_2 - \Sigma_1 \right) \, .
\end{equation}

Evaluating $\frac{\partial{L_D}}{\partial{\Delta}}$ at ${\Delta = \Sigma_1^{-1} - \Sigma_2^{-1}}$ yields  $0$, establishing that $\Delta = \Sigma_1^{-1} - \Sigma_2^{-1}$ minimizes $L_D$. Furthermore, the Hessian of $L_D$ with respect to $\Delta$ is $\frac{\partial^2{L_D}}{\partial{\Delta^2}} = (\Sigma_1 \otimes \Sigma_2 + \Sigma_2 \otimes \Sigma_1)/2$ where $\otimes$ is the kronecker product. Assuming both $\Sigma_1$ and  $\Sigma_2$ are positive definite, $L_D$ is convex in $\Delta$ and $\Delta = \Sigma_1^{-1} - \Sigma_2^{-1}$ is the unique minimizer.

In practice, the population quantites $\Sigma_1$ and $\Sigma_2$ are not available. Instead, we use the estimates $\hat{\Sigma}_1$ and $\hat{\Sigma}_2$. We can additionally incorporate an $\ell_1$ penalty to estimate a sparse $\Delta$ which gives us our estimator in (\ref{eqn_dtrace_est}).

\section{FUNCTIONAL DEPENDENCE FRAMEWORK}
\label{apdx_functional_dependence}

Let $\epsilon^*_{0}$ and $\{\epsilon_t\}_{t \in \mathbb{Z}}$ be i.i.d. vectors in $\mathbb{R}^{b}$ and $\mathcal{F}_t = (\dots, \epsilon_{t-1}, \epsilon_t)$. We further let $\mathbf{x}_{l,t} = \left(x_{l,1t}, \dots x_{l,pt}\right)^T$ be a $p$-dimensional process in condition $l$ where
\[
x_{l, jt} = R_{l,j}(\mathcal{F}_t).
\]

Note that $l$ indexes the condition, in this case 1 or 2, and $j$ indexes the variable, $1, \dots, p$. To measure the dependence of $\mathbf{x}_{l,t}$ on $\epsilon_0$ we can replace $\epsilon_0$ in $\mathcal{F}_t$ with an i.i.d. copy $\epsilon_0^*$. This gives $\mathcal{F}'_{t} = (\dots, \epsilon_{-1}, \epsilon_{0}^*, \epsilon_{1}, \dots, \epsilon_t)$ and
\[
x'_{l,jt} = R_{l,j}(\mathcal{F}'_t).
\]

The dependence on $\epsilon_0$ can then be measured by
\[
\theta_{l,jt} = \left(\mathbb{E}\left|x_{l,jt} - x_{l,jt}' \right|^2 \right)^{1/2}
\]
where $\mathbb{E}\left| x \right|^2$ is the expected value of $\left| x \right|^2$. Controlling this dependence measure is essential for establishing the convergence rates of $\hat{f}_l$. The needed conditions on  $\theta_{l,jt}$ are laid out in Assumption~\ref{apdx_assumption_fiecas_dependence}.

\begin{assumption} \label{apdx_assumption_fiecas_dependence}
    Assume for some constant $0 < \rho < 1$, 
    \[
    \max_{l=\{1,2\}} \max_{j = 1, \dots, p} \theta_{l,jt} = O(\rho^{t}),
    \]
    and for some constant $\kappa > 0$, $C_0 > 0$,
    \[
    \max_{l=\{1,2\}} \max_{j = 1, \dots, p} \mathbb{E}\left(\exp\left( \kappa |\mathbf{x}_{l,j0}| \right) \right) \leq C_0.
    \]
\end{assumption}

Assumption~\ref{apdx_assumption_fiecas_dependence} is essentially a restatement of Assumption~1 in \citet{fiecas2019spectral}, except that we slightly generalize it to account for the fact that we have data from two separate conditions. This condition requires that the dependence on $\epsilon_0$ decreases geometrically as time increases. A similar condition is used in \citet{wu2018asymptotic} and \citet{liu2010asymptotics} and essentially requires that the dependence in the data on prior time points cannot be too strong. As noted in \citet{shao2007asymptotic} and examples 1 and 2 of \citet{liu2010asymptotics}, many processes satisfy this constraint, including ARMA, ARMA-ARCH, and ARMA-GARCH processes as well as other nonlinear autoregressive processes.

Assumption~\ref{apdx_assumption_fiecas_dependence} is satisfied if for example Assumption~1 from \citet{fiecas2019spectral} is satisfied for both conditions. For example if $\{\rho_1, \kappa_1, C_{0,1}\}$ and $\{\rho_2, \kappa_2, C_{0,2}\}$ are the parameters that satisfy Assumption 1 of \citet{fiecas2019spectral} for the data in conditions 1 and 2 respectively, then $\{\max(\rho_{1},\rho_{2}),  \min(\kappa_1,\kappa_{2}), \max(C_{0,1},C_{0,2})\}$ satisfy our Assumption~\ref{apdx_assumption_fiecas_dependence}. We are now ready to state the consistency of our estimator, $\hat{\Delta}$.

\section{THEORETICAL RESULTS} \label{apdx_proof_consistency}

To prove Theorem~\ref{thm_consistency} we will use the framework of \citet{negahban2012unified}, specifically Corollary~1. This requires establishing that the D-trace loss satisfies the Restricted Strong Convexity (RSC) condition and that the regularizing penalty is decomposable. An essential ingredient to all these results is a concentration bound on $\hat{f}_l$. We begin by establishing this bound.

\begin{lemma}[Concentration inequality on $\hat{f}_l$] \label{lemma_conc_f}
Under Assumption~\ref{apdx_assumption_fiecas_dependence}, for any $H > 0$, and for an optimal bandwidth choice of $M_{n_l}^* = O(n_l^{2/3})$, there exists a constant $C_1>0$ that depends only on $\kappa$ and $C_0$ such that 

\begin{equation}
\mathbb{P}\left( \sup_{\{\lambda_j = 2 \pi j/n_l, \lfloor (n_l-1)/2 \rfloor \leq j \leq \lfloor n_l/2 \rfloor \}} \|\hat{f}_{l}(\lambda_j) - f_{l}(\lambda_j) \|_{\infty} > C_{1} 8^{2/3}2n_l^{-\frac{1}{3}} \right) \leq C_{2} p^2 n_l^{-H} \, ,
\end{equation} 
where $l \in \{1,2\}$ indicates condition and the constant $C_2 > 0$ is constant in $n_l$ but depends on $C_1$ and $H$.
\end{lemma}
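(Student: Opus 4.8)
The plan is to import the single-condition concentration bound for the smoothed periodogram from Theorem~3.1 of \citet{fiecas2019spectral}, apply it in each condition, and then optimize over the bandwidth. Under the functional dependence framework of \citet{wu2005nonlinear,wu2018asymptotic} together with the geometric-decay and exponential-moment conditions of Assumption~\ref{apdx_assumption_fiecas_dependence}, Theorem~3.1 of \citet{fiecas2019spectral} yields a high-probability bound, uniform over the Fourier frequencies, of the form $\sup_{\lambda_j}\|\hat f_l(\lambda_j)-f_l(\lambda_j)\|_\infty \le C_1\bigl(\mathrm{bias}(M_{n_l},n_l)+\mathrm{var}(M_{n_l},n_l)\bigr)$ with failure probability at most $C_2 p^2 n_l^{-H}$ for any $H>0$. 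The bias term comes from averaging the periodogram over $2M_{n_l}+1$ neighboring frequencies, each within $2\pi M_{n_l}/n_l$ of $\lambda_j$, and under geometric decay is of order $M_{n_l}/n_l$; the stochastic term, after controlling the temporal dependence through the functional dependence measures $\theta_{l,jt}$, is of order $\sqrt{(\log p)/M_{n_l}}$ up to logarithmic factors in $n_l$. The constants $C_1$ depends only on $\kappa,C_0$, and $C_2$ additionally on $H$.

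First I would record this bound separately for $l=1$ and $l=2$. Because Assumption~\ref{apdx_assumption_fiecas_dependence} is precisely Assumption~1 of \citet{fiecas2019spectral} imposed on both conditions with common constants $\rho,\kappa,C_0$ (as explained in Appendix~\ref{apdx_functional_dependence}), Theorem~3.1 of \citet{fiecas2019spectral} applies verbatim to each of $\hat f_1,\hat f_2$ with the same $C_1,C_2$. Second, I would optimize the bandwidth: treating $M_{n_l}$ as free, the deviation is of order $M_{n_l}/n_l + 1/\sqrt{M_{n_l}}$ up to logarithmic factors, and minimizing over $M_{n_l}$ gives $M_{n_l}^\ast \asymp n_l^{2/3}$, at which both terms are of order $n_l^{-1/3}$; substituting this choice into the explicit constant in Theorem~3.1 of \citet{fiecas2019spectral} produces the stated right-hand side $C_1 8^{2/3} 2\, n_l^{-1/3}$, where the numerical factor $8^{2/3}\cdot 2$ is simply the value the Fiecas constant takes at $M_{n_l}^\ast$. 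Third, if that theorem is stated at a fixed frequency rather than uniformly, a union bound over the $O(n_l)$ Fourier frequencies inflates the failure probability by a factor $O(n_l)$, which is absorbed by applying the bound with $H+1$ in place of $H$ since $n_l\cdot n_l^{-(H+1)}=n_l^{-H}$; otherwise this step is vacuous.

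The main obstacle is bookkeeping rather than conceptual: faithfully restating Theorem~3.1 of \citet{fiecas2019spectral} in the present notation, tracking how $C_0,\kappa,\rho$ enter $C_1,C_2$, verifying that the bias–variance balance gives the exponent $2/3$ for the bandwidth and $-1/3$ for the rate under geometric decay (and that logarithmic factors in $n_l$ are harmless since they are dominated by any polynomial slack), and confirming the numerical constant collapses to $C_1 8^{2/3}2$. No new probabilistic machinery beyond what is imported from \citet{fiecas2019spectral} and the underlying functional dependence results of \citet{wu2005nonlinear,wu2018asymptotic} is required; the geometric-decay assumption is exactly what makes the bias contribution of the far-away frequencies negligible, as used analogously in \citet{shao2007asymptotic,liu2010asymptotics}.
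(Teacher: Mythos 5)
Your proposal follows essentially the same route as the paper: invoke Theorem~3.1 of \citet{fiecas2019spectral} under Assumption~\ref{apdx_assumption_fiecas_dependence} (valid for each condition $l$), then choose the bandwidth to balance the bias term $C_1 M_{n_l}/n_l$ against the stochastic term, yielding $M_{n_l}^\ast \asymp n_l^{2/3}$ and the rate $n_l^{-1/3}$. The only bookkeeping you elide is that the imported stochastic term is $8 n_l^{\delta}/M_{n_l}^{1/2+\delta}$ with a free parameter $\delta>0$, which the paper optimizes exactly in $M_{n_l}$, bounds uniformly over $\delta$, and then sends $\delta\to 0$ to obtain the explicit constant $C_1 8^{2/3}\cdot 2$; your heuristic $M_{n_l}/n_l + M_{n_l}^{-1/2}$ balance gives the same exponents.
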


\begin{proof}

Under Assumption~\ref{apdx_assumption_fiecas_dependence} we have that Theorem 3.1 from \citep{fiecas2019spectral} holds so that for any $\delta > 0$ and $H > 0$ and for $C_1 > 0$ depending only on $\kappa$ and $C_0$

\[
    \mathbb{P}\left( \sup_{\{\lambda_j = 2 \pi j/n_l, \lfloor (n_l-1)/2 \rfloor \leq j \leq \lfloor n_l/2 \rfloor \}} \|\hat{f}_{l}(\lambda_j) - f_{l}(\lambda_j) \|_{\infty} > \frac{C_{1} M_{n_l}}{n_l} + \frac{8 n_l^{\delta}}{M_{n_l}^{1/2 + \delta}}\right) \leq C_{2} p^2 n_l^{-H} \, ,
\]

where $l$ indexes condition 1 or 2 and $C_2$ is constant in $n_l$ but depends on $C_1$ and $H$. Note that this same inequality holds for both conditions $l = 1,2$ because Assumption~\ref{apdx_assumption_fiecas_dependence} holds for both conditions. We allow $n_l$ to differ between conditions for full generality. We can then select $M_{n_l}$ to minimize the deviations. That is we select $M_{n_l}$ to minimize $\frac{C_{1} M_{n_l}}{n_l} + \frac{8 n_l^{\delta}}{M_{n_l}^{1/2 + \delta}}$.  The algebra is omitted but setting $\frac{\partial{}}{\partial{M_{n_l}}} \left[\frac{C_{1} M_{n_l}}{n_l} + \frac{8 n_l^{\delta}}{M_{n_l}^{1/2 + \delta}}\right] = 0$ and solving we get $M^{*}_{n_l} = \left(\frac{8}{C_{1}}\left(\frac{1}{2} + \delta\right)\right)^{2/(3+2\delta)} n_l^{\frac{2 + 2\delta}{3 + 2\delta}}$. It can be shown that the second derivative is positive on $\delta > 0$ so this is indeed a global minimum (on $\delta > 0$). Plugging this back into the equation yields 

\[
\frac{C_{1} M_{n_l}}{n_l} + \frac{8 n_l^{\delta}}{M_{n_l}^{1/2 + \delta}} \Bigg|_{M_{n_l} = M_{n_l}^*} = C_{1}^{\frac{1 + 2\delta}{3 + 2 \delta}} 8^{\frac{2}{3 + 2\delta}}\left( \left(\frac{1}{2} + \delta \right)^{\frac{2}{3 + 2\delta}} + \left( \frac{1}{2} + \delta \right)^{-\frac{1 + 2\delta}{3 + 2\delta}} \right) n_l^{-\frac{1}{3 + 2\delta}} \, .
\]

One can further show that $\left( \left(\frac{1}{2} + \delta \right)^{\frac{2}{3 + 2\delta}} + \left( \frac{1}{2} + \delta \right)^{-\frac{1 + 2\delta}{3 + 2\delta}} \right)$ is maximized at $\delta = 1/2$ with a maximum value of $2$ (algebra omitted). Furthermore $C_{1}^{\frac{1 + 2\delta}{3 + 2 \delta}} \leq C_{1}$ and $8^{\frac{2}{3 + 2\delta}} \leq 8^{2/3}$ for all $\delta > 0$. Therefore we have that for all $\delta > 0$,

\[
\frac{C_{1} M^*_{n_l}}{n_l} + \frac{8 n_l^{\delta}}{(M^*_{n_l})^{1/2 + \delta}} \leq C_{1}8^{2/3}2n_l^{-\frac{1}{3 + 2\delta}} \, .
\]

Letting $\delta \rightarrow 0$ the RHS of our concentration bound becomes $C_1 8^{2/3}2n_l^{-\frac{1}{3}}$ establishing the lemma.

\end{proof}

\begin{remark}
    Since $\|\hat{\Sigma}_{l}(\lambda_j) - \Sigma_{l}(\lambda_j) \|_{\infty} \leq \|\hat{f}_{l}(\lambda_j) - f_{l}(\lambda_j) \|_{\infty}$, it follows that Lemma~\ref{lemma_conc_f} also holds for $\|\hat{\Sigma}_{l}(\lambda_j) - \Sigma_{l}(\lambda_j) \|_{\infty}$.
\end{remark}

To establish RSC of the D-trace loss function we will also need a concentration inequality on the second derivative of the D-trace loss which is stated in Lemma~\ref{lemma_conc_hessian}.

\begin{lemma}[Concentration inequality on second derivative of D-trace loss] \label{lemma_conc_hessian}

Under the conditions from Lemma~\ref{lemma_conc_f} we have that for $\min(n_1,n_2) \geq 8^{9/2}C_{1}^3/C_f^3$, with probability $> 1 - 2C_2p^2\min(n_1,n_2)^{-H}$, 

\begin{equation*}
\left\|0.5 \left( \hat{\Sigma}_1 \otimes \hat{\Sigma}_2 + \hat{\Sigma}_2 \otimes \hat{\Sigma}_1 \right) - 0.5\left(\Sigma_1 \otimes \Sigma_2  + \Sigma_2 \otimes \Sigma_1 \right)\right\|_{\infty} \leq \frac{C_fC_{1}8^{5/2}}{\min(n_1,n_2)^{1/3}} \, ,
\end{equation*}

where $C_{f} = \max(\|f_{1}\|_{\infty}, \|f_{2}\|_{\infty})$ and $C_1$ is the same as in Lemma~\ref{lemma_conc_f}.
\end{lemma}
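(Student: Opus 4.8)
The plan is to turn this into a deterministic manipulation sitting on top of Lemma~\ref{lemma_conc_f}, using two elementary facts about the entrywise max norm: it is multiplicative across Kronecker products, $\|A \otimes B\|_\infty = \|A\|_\infty \|B\|_\infty$ (the entries of $A\otimes B$ are exactly the pairwise products of entries of $A$ and $B$), and the telescoping identity
\[
\hat{\Sigma}_1 \otimes \hat{\Sigma}_2 - \Sigma_1 \otimes \Sigma_2 = (\hat{\Sigma}_1 - \Sigma_1)\otimes(\hat{\Sigma}_2 - \Sigma_2) + (\hat{\Sigma}_1 - \Sigma_1)\otimes \Sigma_2 + \Sigma_1 \otimes (\hat{\Sigma}_2 - \Sigma_2),
\]
together with the analogous identity with the roles of $1$ and $2$ swapped. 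Combining the triangle inequality over these three (resp. six) terms with the multiplicativity reduces everything to the scalar deviations $\|\hat{\Sigma}_l - \Sigma_l\|_\infty$ and the norms $\|\Sigma_l\|_\infty$.

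First I would record that $\|\Sigma_l\|_\infty \le \|f_l\|_\infty \le C_f$ and $\|\hat{\Sigma}_l\|_\infty \le \|\hat{f}_l\|_\infty$: each entry of $\Sigma_l = \left[\begin{smallmatrix} A_l & -B_l \\ B_l & A_l \end{smallmatrix}\right]$ equals, up to sign, the real or imaginary part of an entry of $f_l$, hence has modulus at most the corresponding entry of $|f_l|$. Next I would invoke Lemma~\ref{lemma_conc_f} and the Remark after it in each condition $l\in\{1,2\}$ and take a union bound: on an event of probability at least $1 - 2C_2 p^2 \min(n_1,n_2)^{-H}$ one has, simultaneously for $l=1,2$, $\|\hat{\Sigma}_l - \Sigma_l\|_\infty \le 2\cdot 8^{2/3} C_1 n_l^{-1/3} \le 2\cdot 8^{2/3} C_1 \min(n_1,n_2)^{-1/3} =: a$ (the single fixed frequency $\lambda$ is one of the frequencies in the supremum of Lemma~\ref{lemma_conc_f}, so the pointwise bound follows from the uniform one). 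Then the identity above, multiplicativity, $\|\Sigma_l\|_\infty\le C_f$, and the triangle inequality give, on this event,
\[
\left\|0.5\left(\hat{\Sigma}_1\otimes\hat{\Sigma}_2 + \hat{\Sigma}_2\otimes\hat{\Sigma}_1\right) - 0.5\left(\Sigma_1\otimes\Sigma_2 + \Sigma_2\otimes\Sigma_1\right)\right\|_\infty \le a^2 + 2aC_f.
\]
Finally, the hypothesis $\min(n_1,n_2) \ge 8^{9/2} C_1^3/C_f^3$ is exactly what forces $a \le C_f$ (indeed $a \le 2^{-3/2}C_f$), so $a^2 + 2aC_f \le 3aC_f$; substituting $a = 2\cdot 8^{2/3} C_1 \min(n_1,n_2)^{-1/3}$ and bounding the numerical constant, $3\cdot 2\cdot 8^{2/3} = 24 \le 8^{5/2}$, yields the claimed bound $C_f C_1 8^{5/2}/\min(n_1,n_2)^{1/3}$.

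I do not expect any genuine obstacle here: the lemma is a purely algebraic consequence of Lemma~\ref{lemma_conc_f}. The only points requiring care are the constant bookkeeping — tracking the $8^{2/3}$ factor from Lemma~\ref{lemma_conc_f} through the quadratic cross term $a^2$, and checking that the stated sample-size threshold is precisely what makes the linear term $2aC_f$ dominate so that the quadratic term can be absorbed — and making sure the union bound over the two conditions is what produces the factor $2C_2 p^2$ in the failure probability.
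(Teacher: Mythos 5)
Your proposal is correct and follows essentially the same route as the paper's proof: the same telescoping decomposition of $\hat{\Sigma}_1\otimes\hat{\Sigma}_2 - \Sigma_1\otimes\Sigma_2$ combined with multiplicativity of the entrywise max norm, the bound $\|\Sigma_l\|_\infty\le\|f_l\|_\infty\le C_f$, a union bound over the two conditions giving the $2C_2p^2$ factor, and absorption of the quadratic cross term using the stated sample-size threshold. The only difference is cosmetic constant bookkeeping (the paper's proof carries an $8^{3/2}$ where Lemma~\ref{lemma_conc_f} states $8^{2/3}$; your version, consistent with the lemma as stated, still lands comfortably under the claimed $8^{5/2}$).
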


\begin{proof}

Begin by noting that both 
\[
\|\hat{\Sigma}_2 \otimes \hat{\Sigma}_1 - \Sigma_2 \otimes \Sigma_1 \|_{\infty}, \|\hat{\Sigma}_1 \otimes \hat{\Sigma}_2 - \Sigma_1 \otimes \Sigma_2 \|_{\infty} \leq \|\Sigma_1\|_{\infty} \|\hat{\Sigma}_2 - \Sigma_2\|_{\infty} + \|\Sigma_2\|_{\infty} \|\hat{\Sigma}_1 - \Sigma_1\|_{\infty} + \|\hat{\Sigma}_1 - \Sigma_1\|_{\infty}\|\hat{\Sigma}_2 - \Sigma_2\|_{\infty} \, .
\]

Also we have that $\|\Sigma_l \|_{\infty} \leq \|f_{l}\|_{\infty}$. Using Lemma~\ref{lemma_conc_f}, we get that with probability $> 1 - 2C_{2} p^2\min(n_1, n_2)^{-H}$,
\begin{align*}
\|\hat{\Sigma}_1 \otimes \hat{\Sigma}_2 - \Sigma_1 \otimes \Sigma_2 \|_{\infty} & \leq C_f\left( C_{1}8^{3/2}2n_1^{-1/3} + C_{1}8^{3/2}2n_2^{-1/3} \right) + C_{1}^2 8^{6/2}4n_1^{-1/3}n_2^{-1/3} \\
& \leq C_fC_{1}8^{3/2} 4 \min(n_1, n_2)^{-1/3} + C_{1}^2 8^{6/2}4\min(n_1, n_2)^{-2/3} \, .
\end{align*}

We can see that for $\min(n_1,n_2) \geq 8^{9/2}C_{1}^3/C_f^3 $ the 2nd term on the RHS is smaller than the 1st term on the RHS so we can combine them to simplify the bound to be 2 times the first term on the RHS or

\[
\|\hat{\Sigma}_1 \otimes \hat{\Sigma}_2 - \Sigma_1 \otimes \Sigma_2 \|_{\infty} \leq \frac{C_fC_{1}8^{5/2}}{\min(n_1,n_2)^{1/3}}.\]

This also holds for $\|\hat{\Sigma}_2 \otimes \hat{\Sigma}_1 - \Sigma_2 \otimes \Sigma_1 \|_{\infty}$. Defining 

\[
\text{Event } \mathcal{A} := \left\{ \|\hat{\Sigma}_1 \otimes \hat{\Sigma}_2 - \Sigma_1 \otimes \Sigma_2 \|_{\infty} \leq \frac{C_fC_{1}8^{5/2}}{\min(n_1,n_2)^{1/3}} \right\} \cap \left\{ \|\hat{\Sigma}_2 \otimes \hat{\Sigma}_1 - \Sigma_2 \otimes \Sigma_1 \|_{\infty} \leq \frac{C_fC_{1}8^{5/2}}{\min(n_1,n_2)^{1/3}} \right\}
\]

then from the above results, for $\min(n_1,n_2) \geq 8^{9/2}C_{1}^3/C_f^3$,  Event $\mathcal{A}$ occurs with probability $> 1 - 2C_{2}p^2\min(n_1,n_2)^{-H}$ and we have

\begin{equation}
\left\|0.5 \left( \hat{\Sigma}_1 \otimes \hat{\Sigma}_2 + \hat{\Sigma}_2 \otimes \hat{\Sigma}_1 \right) - 0.5\left(\Sigma_1 \otimes \Sigma_2  + \Sigma_2 \otimes \Sigma_1 \right)\right\|_{\infty} \leq \frac{C_fC_{1}8^{5/2}}{\min(n_1,n_2)^{1/3}} \, .
\end{equation}

\end{proof}

Using Lemma~\ref{lemma_conc_f} and Lemma~\ref{lemma_conc_hessian} we can establish that the D-trace loss satisfies the restricted strong convexity condition. This is formalized in Lemma~\ref{lemma_rsc_dtrace}

\begin{lemma}[RSC of D-trace loss] \label{lemma_rsc_dtrace}

Under the conditions of Lemma~\ref{lemma_conc_f}, for $\min(n_1,n_2) \geq \max\left(8^{9/2}C_{1}^3/C_f^3, \left( \frac{32 s_{\Delta} C_f C_{1} 8^{5/2}}{\lambda_{\mathrm{min}}(f_{1})\lambda_{\mathrm{min}}(f_{2})}\right)^3 \right)$  with probability > $1 - 2C_{2}p^2\min(n_1,n_2)^{-H}$  RSC holds for the D-trace loss with $\kappa_L = \lambda_{\mathrm{min}}(f_{1})\lambda_{\mathrm{min}}(f_{2})/2$. We use $\lambda_{\mathrm{min}}(A)$ to denote the minimum eigenvalue of a matrix $A$ and $s_{\Delta}$ to denote the number of non-zero entries of $\Delta$. The constant $C_f$ is defined in Lemma~\ref{lemma_conc_hessian} while $C_1, C_2$ are defined in Lemma~\ref{lemma_conc_f}.

\end{lemma}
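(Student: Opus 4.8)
The plan is to verify the Restricted Strong Convexity (RSC) condition in the form required by Corollary~1 of \citet{negahban2012unified}: for all $\Delta$ in the cone $\mathcal{C}$ associated with the $\ell_1$ penalty and the support set of the true difference (those $\Delta$ with $\|\Delta_{S^c}\|_1 \le 3\|\Delta_S\|_1$, where $S$ is the support of the population $\Delta$, $|S| = s_\Delta$), one has
\[
\langle \nabla^2 \hat{L}_D \, \mathrm{vec}(\Delta), \mathrm{vec}(\Delta) \rangle \;=\; \langle \hat{\Gamma}\,\mathrm{vec}(\Delta), \mathrm{vec}(\Delta)\rangle \;\ge\; \kappa_L \|\Delta\|_F^2,
\]
with $\kappa_L = \lambda_{\min}(f_1)\lambda_{\min}(f_2)/2$. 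Here $\hat{\Gamma} = \tfrac12(\hat{\Sigma}_1\otimes\hat{\Sigma}_2 + \hat{\Sigma}_2\otimes\hat{\Sigma}_1)$ is the estimated Hessian, which is constant in $\Delta$ since the D-trace loss is quadratic. The strategy is the standard decomposition $\hat{\Gamma} = \Gamma + (\hat{\Gamma} - \Gamma)$: bound the population Hessian quadratic form from below, and bound the perturbation term from above using Lemma~\ref{lemma_conc_hessian} together with the cone restriction to trade an $\|\cdot\|_\infty$ bound on the perturbation against an $\|\cdot\|_1$ bound on $\Delta$.

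First I would handle the population term. Since $\Sigma_1, \Sigma_2$ are symmetric positive definite (being real-space expansions of positive-definite Hermitian spectral densities), $\Gamma = \tfrac12(\Sigma_1\otimes\Sigma_2 + \Sigma_2\otimes\Sigma_1)$ is symmetric positive definite, and for any symmetric $\Delta$ (note $\Sigma_l\otimes\Sigma_m + \Sigma_m\otimes\Sigma_l$ applied to $\mathrm{vec}(\Delta)$ corresponds to $\tfrac12(\Sigma_m\Delta\Sigma_l + \Sigma_l\Delta\Sigma_m)$) we get $\langle \Gamma\,\mathrm{vec}(\Delta),\mathrm{vec}(\Delta)\rangle \ge \lambda_{\min}(\Sigma_1)\lambda_{\min}(\Sigma_2)\|\Delta\|_F^2$. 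One then relates $\lambda_{\min}(\Sigma_l)$ to $\lambda_{\min}(f_l)$: by Lemma~A.1 of \citet{fiecas2019spectral} the eigenvalues of the block matrix $\Sigma_l = \bigl[\begin{smallmatrix} A_l & -B_l \\ B_l & A_l\end{smallmatrix}\bigr]$ coincide (with multiplicity) with the eigenvalues of the Hermitian matrix $f_l = A_l + iB_l$, so $\lambda_{\min}(\Sigma_l) = \lambda_{\min}(f_l)$. This gives the population lower bound $2\kappa_L\|\Delta\|_F^2$.

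Next I would control the perturbation. For $\Delta$ in the cone, $\|\Delta\|_1 = \|\Delta_S\|_1 + \|\Delta_{S^c}\|_1 \le 4\|\Delta_S\|_1 \le 4\sqrt{s_\Delta}\|\Delta_S\|_F \le 4\sqrt{s_\Delta}\|\Delta\|_F$. Then
\[
\bigl|\langle (\hat{\Gamma}-\Gamma)\mathrm{vec}(\Delta),\mathrm{vec}(\Delta)\rangle\bigr| \le \|\hat{\Gamma}-\Gamma\|_\infty \|\mathrm{vec}(\Delta)\|_1^2 \le \|\hat{\Gamma}-\Gamma\|_\infty\, 16 s_\Delta \|\Delta\|_F^2,
\]
where $\|\cdot\|_\infty$ is the entrywise max and I use $|a^\top M b| \le \|M\|_\infty \|a\|_1\|b\|_1$. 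On the high-probability event of Lemma~\ref{lemma_conc_hessian} (which requires $\min(n_1,n_2)\ge 8^{9/2}C_1^3/C_f^3$), $\|\hat{\Gamma}-\Gamma\|_\infty \le C_f C_1 8^{5/2}/\min(n_1,n_2)^{1/3}$. Hence the perturbation is at most $16 s_\Delta C_f C_1 8^{5/2}\min(n_1,n_2)^{-1/3}\|\Delta\|_F^2$, which is $\le \kappa_L\|\Delta\|_F^2$ precisely when $\min(n_1,n_2)^{1/3}\ge 16 s_\Delta C_f C_1 8^{5/2}/\kappa_L = 32 s_\Delta C_f C_1 8^{5/2}/(\lambda_{\min}(f_1)\lambda_{\min}(f_2))$, i.e.\ the second sample-size condition in the statement. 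Subtracting, $\langle \hat{\Gamma}\,\mathrm{vec}(\Delta),\mathrm{vec}(\Delta)\rangle \ge 2\kappa_L\|\Delta\|_F^2 - \kappa_L\|\Delta\|_F^2 = \kappa_L\|\Delta\|_F^2$ on the same event, which holds with probability $> 1 - 2C_2 p^2\min(n_1,n_2)^{-H}$, as claimed.

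The main obstacle I anticipate is bookkeeping the identification between the real-space $2p\times 2p$ objects and the complex $p\times p$ spectral densities — in particular confirming that $\lambda_{\min}(\Sigma_l) = \lambda_{\min}(f_l)$ and that the Kronecker-structured Hessian acting on $\mathrm{vec}(\Delta)$ really does reproduce the symmetrized product $\tfrac12(\Sigma_m\Delta\Sigma_l+\Sigma_l\Delta\Sigma_m)$ for symmetric $\Delta$, so that the positive-definiteness argument goes through with the stated constant. The cone geometry and the $\|M\|_\infty$–$\ell_1$ Hölder step are routine; the subtlety is making sure no factor of two is lost in the symmetrization and that the restriction to symmetric $\Delta$ (inherited from the block structure of $\hat{\Delta}$) is legitimate in applying the Negahban–Ravikumar–Wainwright machinery.
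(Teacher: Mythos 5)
Your proposal is correct and follows essentially the same route as the paper's proof: decompose the Hessian quadratic form into a population part bounded below by $\lambda_{\mathrm{min}}(\Sigma_1)\lambda_{\mathrm{min}}(\Sigma_2)\|\mathbf{m}\|_2^2$ plus a perturbation controlled via the cone inequality $\|\mathbf{m}\|_1 \leq 4\sqrt{s_{\Delta}}\|\mathbf{m}\|_2$, H\"older, and Lemma~\ref{lemma_conc_hessian}, with the sample-size condition chosen so the perturbation is at most half the population term. The only cosmetic difference is that you bound $\|\hat{\Gamma}-\Gamma\|_{\infty}$ for the symmetrized Hessian directly, whereas the paper bounds the two Kronecker products separately before averaging; the constants and the identity $\lambda_{\mathrm{min}}(\Sigma_l)=\lambda_{\mathrm{min}}(f_l)$ are used identically.
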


\begin{proof}

To show that the D-trace loss function satisfies the RSC condition, we must show that $\mathbf{m}^T \left( \nabla^2 L(\Delta, \hat{\Sigma}_1, \hat{\Sigma}_2) \right) \mathbf{m} > \kappa_L \|\mathbf{m}\|_2^2$ for all $\mathbf{m} \in \mathcal{C}(S_{\Delta})$ where $\mathcal{C}(S_{\Delta}) = \left\{ \mathbf{m} \in \mathbb{R}^{4p^2} \big| \|\mathbf{m}_{S_{\Delta}^c}\|_1 \leq 3 \|\mathbf{m}_{S_{\Delta}}\|_1 \right\}$ for some $\kappa_L$. For our loss, we let $S_{\Delta}$ be the support of $\Delta$, the true difference. The notation $\mathbf{m}_{S_{\Delta}}$ selects the entries of $\mathbf{m}$ that correspond to the non-zero indices of $\Delta$ while $\mathbf{m}_{S_{\Delta}^c}$ selects the entries corresponding to the zero indices of $\Delta$.

First note that for $\mathbf{m} \in \mathcal{C}(S_{\Delta})$, $\|\mathbf{m}\|_1 \leq 4 \|\mathbf{m}_{S_{\Delta}}\|_1 \leq 4\sqrt{s_{\Delta}}\|\mathbf{m}_{S_{\Delta}}\|_2$. Recall $\nabla^{2}L(\Delta, \hat{\Sigma}_1, \hat{\Sigma}_2) = 0.5(\hat{\Sigma}_1 \otimes \hat{\Sigma}_2 + \hat{\Sigma_2} \otimes \hat{\Sigma}_1)$. Showing RSC proceeds the same as in \citet{wang2021direct}.

\begin{align}
\begin{split}
\label{RSC_ineq}
\mathbf{m}^T (\hat{\Sigma}_1 \otimes \hat{\Sigma}_2) \mathbf{m} & \geq \mathbf{m}^T (\Sigma_1 \otimes \Sigma_2) \mathbf{m} + \mathbf{m}^T \left( \hat{\Sigma}_1 \otimes \hat{\Sigma}_2 - \Sigma_1 \otimes \Sigma_2 \right) \mathbf{m} \\
& \geq \lambda_{\mathrm{min}}(\Sigma_1) \lambda_{\mathrm{min}}(\Sigma_2) \|\mathbf{m}\|_2^2 - \left| \mathbf{m}^T \left(\hat{\Sigma}_1 \otimes \hat{\Sigma}_2 - \Sigma_1 \otimes \Sigma_2  \right) \mathbf{m} \right| \\
& \geq \lambda_{\mathrm{min}}(\Sigma_1) \lambda_{\mathrm{min}}(\Sigma_2) \|\mathbf{m}\|_2^2 - \|\hat{\Sigma}_1 \otimes \hat{\Sigma}_2 - \Sigma_1 \otimes \Sigma_2 \|_{\infty} \|\mathbf{m}\|_1^{2} \\
& \geq \lambda_{\mathrm{min}}(\Sigma_1) \lambda_{\mathrm{min}}(\Sigma_2) \|\mathbf{m}\|_2^2 - 16s_{\Delta}\|\hat{\Sigma}_1 \otimes \hat{\Sigma}_2 - \Sigma_1 \otimes \Sigma_2 \|_{\infty} \|\mathbf{m}\|_2^{2} \, .
\end{split}
\end{align}

The last inequality follows from the fact that $\| \mathbf{m} \|_{1} = \|\mathbf{m}_{S_{\Delta}^c}\|_1 + \|\mathbf{m}_{S_{\Delta}}\|_1 \leq 4\|\mathbf{m}_{S_{\Delta}}\|_1 \leq 4\sqrt{s_{\Delta}}\|\mathbf{m}_{S_{\Delta}}\|_2 \leq 4\sqrt{s_{\Delta}}\|\mathbf{m}\|_2$. Similarly

\begin{equation*} 
\mathbf{m}^T (\hat{\Sigma}_2 \otimes \hat{\Sigma}_1) \mathbf{m} \geq\lambda_{\mathrm{min}}(\Sigma_1) \lambda_{\mathrm{min}}(\Sigma_2) \|\mathbf{m}\|_2^2 - 16s_{\Delta}\|\hat{\Sigma}_2 \otimes \hat{\Sigma}_1 - \Sigma_2 \otimes \Sigma_1 \|_{\infty} \|\mathbf{m}\|_2^{2} \, .
\end{equation*}

Using Lemma~\ref{lemma_conc_hessian} and the fact that $\lambda_{\mathrm{min}}(\Sigma_l) = \lambda_{\mathrm{min}}(f_{l})$  we have that, for $\min(n_1,n_2) \geq 8^{9/2}C_{1}^3/C_f^3$ and with probability $> 1 - 2C_2p^2\min(n_1,n_2)^{-H}$, 

\[
\mathbf{m}^T 0.5(\hat{\Sigma}_1 \otimes \hat{\Sigma}_2 + \hat{\Sigma}_2 \otimes \hat{\Sigma}_1) \mathbf{m} \geq \lambda_{\mathrm{min}}(f_{1}) \lambda_{\mathrm{min}}(f_{2}) \|\mathbf{m}\|_2^2 - 16s_{\Delta} \frac{C_fC_{1}8^{5/2}}{\min(n_1,n_2)^{1/3}} \|\mathbf{m}\|_2^{2}
\]

So for $\min(n_1,n_2) \geq \left( \frac{32 s_{\Delta} C_f C_{1} 8^{5/2}}{\lambda_{\mathrm{min}}(f_{1})\lambda_{\mathrm{min}}(f_{2})}\right)^3$ then the 2nd term on the RHS is $\leq$ 0.5*(1st term on RHS). Thus we get

\[
\mathbf{m}^T 0.5(\hat{\Sigma}_1 \otimes \hat{\Sigma}_2 + \hat{\Sigma}_2 \otimes \hat{\Sigma}_1) \mathbf{m} \geq \frac{\lambda_{\mathrm{min}}(f_{1}) \lambda_{\mathrm{min}}(f_{2})}{2} \|\mathbf{m}\|_2^2
\]

Therefore we conclude that for $\min(n_1,n_2) \geq \max\left(8^{9/2}C_{1}^3/C_f^3, \left( \frac{32 s_{\Delta} C_f C_{1} 8^{5/2}}{\lambda_{\mathrm{min}}(f_{1})\lambda_{\mathrm{min}}(f_{2})}\right)^3 \right)$  with probability > $1 - 2C_{2}p^2\min(n_1,n_2)^{-H}$  RSC holds with $\kappa_L = \lambda_{\mathrm{min}}(f_{1})\lambda_{\mathrm{min}}(f_{2})/2$.

\end{proof}

We are now ready to prove Theorem~\ref{thm_consistency}.

\begin{proof}[Proof of Theorem~\ref{thm_consistency}]

    We first proceed by establishing the decomposability of our penalty term, $\left\| \Delta \right\|_1$. Decomposability is defined in Definition 1 of \citet{negahban2012unified}. Let $\overline{\mathcal{M}} = \left\{ \theta : \mathbb{R}^{2p \times 2p} \big| \theta_{ij} \neq 0 \text{ for all } (i,j) \in S_{\Delta} \right\}$ and $\overline{\mathcal{M}}^{\perp} = \left\{ \gamma : \mathbb{R}^{2p \times 2p} \big| \gamma_{ij} = 0 \text{ for all } (i,j) \in S_{\Delta} \right\}$. Then $\| \cdot \|_1$ is decomposable since $\|\theta + \gamma\|_1 = \|\theta \|_1 + \|\gamma\|_1$ for all $\theta \in \overline{\mathcal{M}}$ and $\gamma \in \overline{\mathcal{M}}^{\perp}$.

    Next we discuss a suitable choice of penalty scale $\tau_{n_1, n_2}$. From Theorem 1 in \citet{negahban2012unified} we require $\tau_{n_1, n_2} \geq 2\|\nabla L_D(\Delta, \hat{\Sigma}_1, \hat{\Sigma}_2)\|_{\infty}$. We know that $2 \| \nabla L_D(\Delta, \hat{\Sigma}_1, \hat{\Sigma}_2)\|_{\infty} = 2 \|0.5\left( \hat{\Sigma}_1 \Delta \hat{\Sigma}_2 + \hat{\Sigma}_2 \Delta \hat{\Sigma}_1 \right) - \left( \hat{\Sigma}_1 - \hat{\Sigma}_2 \right)\|_{\infty}.$
    
    Let $\Gamma = 0.5\left(\Sigma_2 \otimes \Sigma_1 + \Sigma_1 \otimes \Sigma_2 \right)$ and $\hat{\Gamma} = 0.5\left(\hat{\Sigma}_2 \otimes \hat{\Sigma}_1 + \hat{\Sigma}_1 \otimes \hat{\Sigma}_2 \right).$ Then we write
    \begin{align*}
    \|0.5\left( \hat{\Sigma}_1 \Delta \hat{\Sigma}_2 + \hat{\Sigma}_2 \Delta \hat{\Sigma}_1 \right) - \left( \hat{\Sigma}_1 - \hat{\Sigma}_2 \right)\|_{\infty} & = \|\hat{\Gamma} \text{vec}(\Delta) - \left( \text{vec}(\hat{\Sigma}_1) - \text{vec}(\hat{\Sigma}_2)\right) \|_{\infty} \\
    & = \|(\hat{\Gamma} - \Gamma)\text{vec}(\Delta) - \left( \text{vec}(\hat{\Sigma}_1) - \text{vec}(\Sigma_1) \right) - \left( \text{vec}(\hat{\Sigma}_2) - \text{vec}(\Sigma_2)  \right)\|_{\infty} \\
    & \leq \|(\hat{\Gamma} - \Gamma)\text{vec}(\Delta)\|_{\infty} + \|\hat{\Sigma}_1 - \Sigma_1 \|_{\infty} +\|\hat{\Sigma}_2 - \Sigma_2 \|_{\infty} \\
    & \leq \|\hat{\Gamma} - \Gamma\|_{\infty} \|\Delta\|_1 + \|\hat{\Sigma}_1 - \Sigma_1 \|_{\infty} + \|\hat{\Sigma}_2 - \Sigma_2 \|_{\infty} \, .\\
    \end{align*}

    By Lemma~\ref{lemma_conc_hessian} we have for $\min(n_1, n_2) \geq \frac{C_{1}^3}{C_f^3} 8^{9/2}$, with probability $> 1 - 2C_{2}p^2\min(n_1,n_2)^{-H}$,
    \begin{align}
    \label{eqn_dtrace_deriv_norm}
    \left\|0.5\left( \hat{\Sigma}_1 \Delta \hat{\Sigma}_2 + \hat{\Sigma}_2 \Delta \hat{\Sigma}_1 \right) - \left( \hat{\Sigma}_1 - \hat{\Sigma}_2 \right)\right\|_{\infty} & \leq \frac{C_{1} C_f 8^{5/2}}{\min(n_1,n_2)^{1/3}} \|\Delta\|_1 + \frac{C_{1}8^{3/2}2}{n_1^{1/3}} + \frac{C_{1}8^{3/2}2}{n_2^{1/3}} \nonumber \\
    & \leq \frac{\left(C_{1} C_f 8^{5/2} \|\Delta\|_1 + 4C_{1}8^{3/2}\right)}{\min(n_1,n_2)^{1/3}} \, .
    \end{align}
    
    Therefore if we choose 
    \[
    \tau_{n_1, n_2} \geq \frac{2\left(C_{1} C_f 8^{5/2} \|\Delta\|_1 + 4C_{1}8^{3/2}\right)}{\min(n_1,n_2)^{1/3}} \, ,
    \]
    we will satisfy $\tau_{n_1, n_2} \geq 2\|\nabla L_D(\Delta, \hat{\Sigma}_1, \hat{\Sigma}_2)\|_{\infty}$ with probability $> 1 - 2C_{2}p^2\min(n_1,n_2)^{-H}$ when $\min(n_1, n_2) \geq \frac{C_{1}^3}{C_f^3} 8^{9/2}$.

    Combining the above with Lemma~\ref{lemma_rsc_dtrace} we get that for $\tau_{n_1, n_2} \geq \frac{2}{\min(n_1,n_2)^{1/3}}\left(C_{1} C_f 8^{5/2} \|\Delta\|_1 + 4C_{1}8^{3/2}\right)$, $\min(n_1,n_2) \geq \max\left(8^{9/2}C_{1}^3/C_f^3, \left( \frac{32 s_{\Delta} C_f C_{1} 8^{5/2}}{\lambda_{\mathrm{min}}(f_{1})\lambda_{\mathrm{min}}(f_{2})}\right)^3 \right)$, and $M_{n,l} = O(n_l^{2/3})$ we have with probability greater than $1 - Cp^2\min(n_1,n_2)^{-H}$ for any $H > 0$, 
    
    \[
    \| \hat{\Delta} - \Delta \|_{F} \leq \frac{\sqrt{18 s_{\Delta}} \tau_{n_1, n_2} }{\lambda_{\mathrm{min}}(f_{1})\lambda_{\mathrm{min}}(f_{2})} \, ,
    \]
    where we have defined $C = 2C_2$. Since $\tau_{n_1, n_2}$ is of order $\min(n_1,n_2)^{-1/3}$, convergence of our D-trace estimator is also of order $\min(n_1,n_2)^{-1/3}$.

\end{proof}

\section{SIMULATION SPECIFICS AND RESULTS} \label{apdx_sim}

We generate data in condition $l$ as
\[
\mathbf{x}_{l,t} = A_l \mathbf{x}_{l,t-1} + \boldsymbol{\epsilon}_{l,t} \,
\]
where $\boldsymbol{\epsilon}_{l,t} \sim N_{p}\left(\mathbf{0}, I_p \right)$. From \citet{sun2018large}, the spectral density at frequency $\lambda$ is known to be
\[
f_l(\lambda) = \frac{1}{2\pi}\left( \mathcal{A}_l(e^{-i\lambda}) \right)^{-1} I_p \left(\left( \mathcal{A}_l(e^{-i\lambda}) \right)^{-1}\right)^{H}, 
\]
where $\mathcal{A}_l(z) =I_p - A_l z$. When the transition matrix $A_l$ is block diagonal, the spectral density $f_l(\lambda)$ is also block diagonal. Since the inverse of a block diagonal matrix can be computed block by block we can easily generate a sparse difference matrix by enforcing the transition matrix in conditions $1$ and $2$ to be the same except for a small block. For example, with $p = 54$ if we generate $A_1$ and $A_2$ as block diagonal with the same $51 \times 51$ block and only differing in the final $3 \times 3$ block, then their spectral densities will only differ in this last $3 \times 3$ block. When converting to the real space, the expanded $\Sigma_i \in (2p \times 2p)$ and thus the true difference $\Delta = \Sigma_1 - \Sigma_2$ will differ in four $3 \times 3$ blocks.

For simulation setting 1, the transition matrix is the same as in \citep{sun2018large}. Specifically, it consists of $18$ blocks of dimension $3 \times 3$, where each block is $\begin{bmatrix} 0.5 & 0.9 & 0 \\ 0 & 0.5 & 0.9 \\ 0 & 0 & 0.5 \end{bmatrix}$.  In both simulation setting 2 and setting 3, 60\% of the coefficients in the $3 \times 3$ block were randomly drawn from either a $\mathrm{Uniform}(-0.5,-0.2)$ or a $\mathrm{Uniform}(0.2,0.5)$ each with equal probability. In the second and third setting, 40\% and 5\% respectively of the entries of the larger block were randomly drawn from a $\mathrm{Uniform}(-0.5,-0.2)$ and $\mathrm{Uniform}(0.2,0.5)$ each with equal probability. This corresponded to 60\% and 95\% sparsity respectively. The number of non-zero entries in the difference in expanded inverse spectral densities, which we will refer to as edges, varies by frequency but is almost always 22 for Simulation 1, 14 for Simulation 2, and 28 for Simulation 3.

For all settings $100$ evenly spaced Fourier frequencies from $0$ to $\pi - n^{-1}$ was used. These were used as the spectral density is conjugate symmetric around $0$. In the case of $n = 100$ only $50$ Fourier frequencies were used as there are only 50 Fourier frequencies from $0$ to $\pi - n^{-1}$. Given the availability of results from multiple frequencies, each simulation was run once and the results are averaged across all frequencies. The standard error of each metric was also computed across frequencies and are reported in parentheses.

To solve the SDD estimation problem (\ref{eqn_dtrace_est}), the alternating direction method of multipliers (ADMM) algorithm from \citet{yuan2017differential} was used. Specifically, we used the \texttt{L1\_dts} function in the \texttt{Difdtl} package available on GitHub at SusanYuan/Difdtl (GPL ($\geq$ 2)). To generate the sequence of penalties, $\{ \tau_{n_1, n_2} \}$, we used 20 values on a log-linear scale from $0.001*\tau_{n_1, n_2, \mathrm{max}}$ to  $\tau_{n_1, n_2, \mathrm{max}}$ where  $\tau_{n_1, n_2, \mathrm{max}}$ represents the minimum value where all entries of $\hat{\Delta}$ are 0. In this case,  $\tau_{n_1, n_2, \mathrm{max}} = 2\max(|\hat{\Sigma}_1 - \hat{\Sigma}_2|)$. The penalty $\tau_{n_1,n_2}^*$ was chosen as the $\tau_{n_1, n_2}$ that minimizes the eBIC which is computed as

\begin{equation}
\label{eqn_sdd_ebic}
    \mathrm{eBIC}(\hat{\Delta})_{\gamma} = \min(n_1,n_2)\left\| \frac{1}{2} \left( \hat{\Sigma}_1 \hat{\Delta} \hat{\Sigma}_2 + \hat{\Sigma}_2 \hat{\Delta} \hat{\Sigma}_1  - \hat{\Sigma}_2 + \hat{\Sigma}_1 \right) \right\|_{\infty} + \log(\min(n_1,n_2))|E| + 4 \gamma |E| \log(p) \, ,
\end{equation}

where $|E|$ is the number of unique edges in $\hat{\Delta}$. Since $\hat{\Delta}$ represents the difference in expanded spectral densities $|E|$ is the number of non-zero entries in the upper triangular portions, including diagonals, of the submatrices $\hat{\Delta}_{1:p,1:p}$ and $\hat{\Delta}_{1:p,(p+1):2p}$. For all applications we use $\gamma = 0.5$. For FGL, Na\"{i}ve, and Hard Thresholding, 20 tuning parameter values were also used. FGL requires two tuning parameter values, $\lambda_1, \lambda_2,$ so a grid of 20 $(\lambda_1, \lambda_2)$ pairs was considered. Since the parameter $\lambda_1$ controls the sparsity of each inverse matrix while $\lambda_2$ controls sparsity of the difference we considered only two values for $\lambda_1$ and 10 values for $\lambda_2$. The values for $\lambda_1$ were $(0.01\lambda_{1,\max}, 0.1\lambda_{1,\max})$ where $\lambda_{1,\max} = \left\| \hat{\Sigma}_{1,O} + \hat{\Sigma}_{2,O} \right\|_{\infty}/2$ and $\hat{\Sigma}_{i,O}$ is the off-diagonals of $\hat{\Sigma}_i$. The $\lambda_2$ sequence was generated using 10 values on a log-linear scale from $0.0001* \lambda_{2,\max}$ to $\lambda_{2,\max}$ where $\lambda_{2,\max} = \max\left( \left\| \hat{\Sigma}_{1,O} \right\|_{\infty} -  0.01\lambda_{1,\max}, \left\| \hat{\Sigma}_{2,O} \right\|_{\infty} -  0.01\lambda_{1,\max} \right)$. For the Na\"{i}ve method, we first computed $\lambda_{\max, i} = \left\| \hat{\Sigma}_i \right\|_{\infty}$ in each condition and then computed the tuning parameters as 20 evenly-spaced values on a log-linear scale from $0.0001 \lambda_{\max, i}$ to $\lambda_{\max,i}$. Similarly, for the Hard Thresholding method, we used 20 evenly-spaced values on a log-linear scale from $\min(|\hat{\Delta}_H|)$ to $\max(|\hat{\Delta}_H|)$ where $\hat{\Delta}_H$ is generated by inverting $\hat{\Sigma}_i$ and forming the difference. Note that for $n = 100$, it is not possible to invert $\hat{\Sigma}_i$.

Let TP, FN, TN, FP denote the true positive, false negative, true negative, and false positive edges identified by either SDD or the Na\"ive method, respectively. A value greater than $1\times10^{-6}$ in absolute value was considered an edge. The metrics are defined as follows
\begin{align*}
\mathrm{Precision} = \frac{\mathrm{TP}}{\mathrm{TP} + \mathrm{FP}} & \qquad & \mathrm{Recall} = \frac{\mathrm{TP}}{\mathrm{TP} + \mathrm{FN}} \\
\mathrm{Accuracy} = \frac{\mathrm{TP} + \mathrm{TN}}{4p^2} & \qquad & \mathrm{RRMSE} = \sqrt{\frac{ \sum_{i,j} (\hat{\Delta}_{i,j} - \Delta_{i,j})^2}{\sum_{i,j} (\Delta_{i,j})^2}}, 
\end{align*} 

where $\hat{\Delta}$ represents the difference estimator, either SDD or the Na\"ive difference, $\Delta$ is the true difference. The denominator of the accuracy measure is $4p^2$ as expanding a $p \times p$ spectral density to the real space gives a $2p \times 2p$ matrix which has $4p^2$ entries. We also report the number of average number of true edges and the average number of estimated edges across frequencies.

%~~~~~~~~~~~~~~~~~~~~~~~~~~~~~~~~~~~~~~~~~~~~~~~~~~~~~~~~~~~~~~~~~%
% Simulation 1 precision & recall
%~~~~~~~~~~~~~~~~~~~~~~~~~~~~~~~~~~~~~~~~~~~~~~~~~~~~~~~~~~~~~~~~~% 
\begin{figure}[ht!]
    \centering
      \includegraphics[width=0.5\linewidth]{./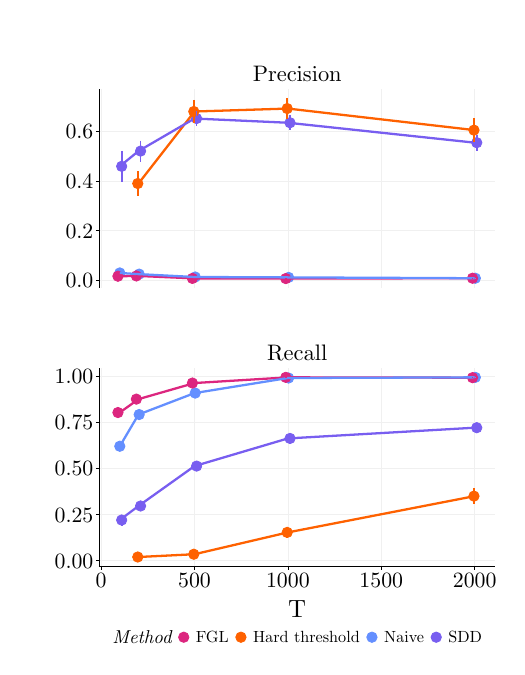} 
    \caption{Simulation 1 Precision and Recall. Results are reported as mean (dots) and SE (vertical lines) where the mean and SE are taken across all frequencies for a given sample size $n$.}
    \label{fig_sim1_prec_reca}
\end{figure}

%~~~~~~~~~~~~~~~~~~~~~~~~~~~~~~~~~~~~~~~~~~~~~~~~~~~~~~~~~~~~~~~~~%
% Simulation 2 precision & recall
%~~~~~~~~~~~~~~~~~~~~~~~~~~~~~~~~~~~~~~~~~~~~~~~~~~~~~~~~~~~~~~~~~% 

\begin{figure}[ht!]
    \centering
      \includegraphics[width=0.5\linewidth]{./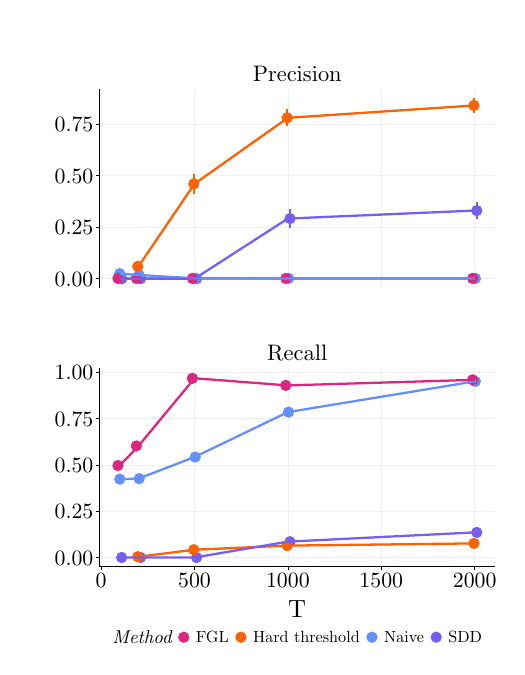} 
    \caption{Simulation 2 Precision and Recall. Results are reported as mean (dots) and SE (vertical lines) where the mean and SE are taken across all frequencies for a given sample size $n$.}
    \label{fig_sim2_prec_reca}
\end{figure}

%~~~~~~~~~~~~~~~~~~~~~~~~~~~~~~~~~~~~~~~~~~~~~~~~~~~~~~~~~~~~~~~~~%
% Simulation 3 full plots
%~~~~~~~~~~~~~~~~~~~~~~~~~~~~~~~~~~~~~~~~~~~~~~~~~~~~~~~~~~~~~~~~~% 

\begin{figure}[ht!]
    \centering
      \includegraphics[width=0.5\linewidth]{./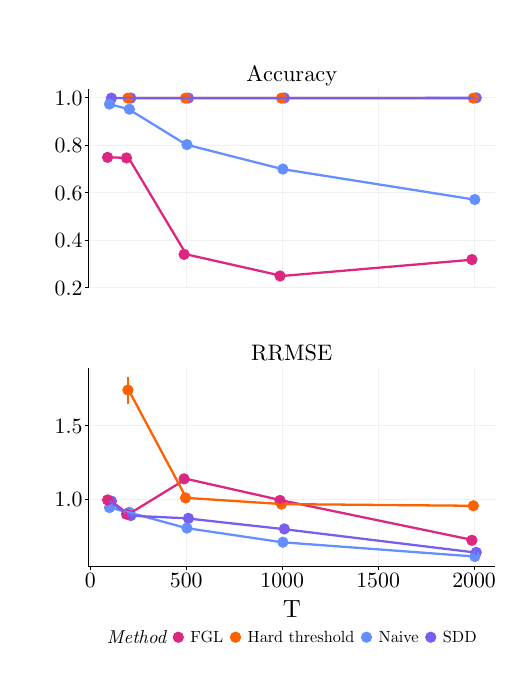} 
    \caption{Simulation 3 Accuracy and RRMSE. Results are reported as mean (dots) and SE (vertical lines) where the mean and SE are taken across all frequencies for a given sample size $n$.}
    \label{fig_sim3_acc_rrmse}
\end{figure}

\begin{figure}[ht!]
    \centering
      \includegraphics[width=0.5\linewidth]{./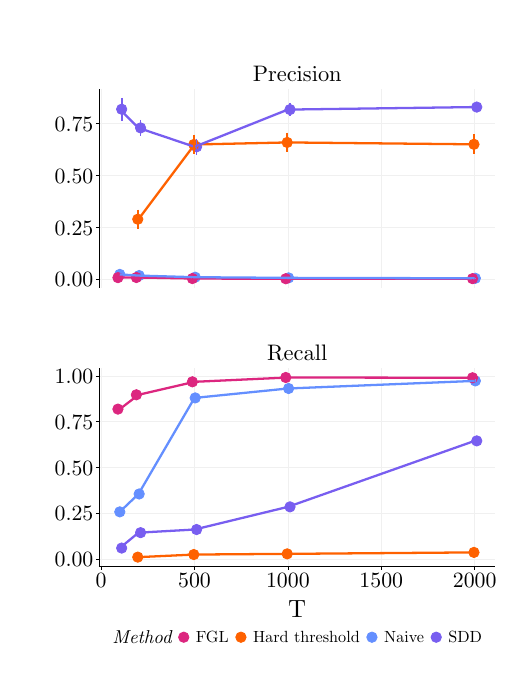} 
    \caption{Simulation 3 Precision and Recall. Results are reported as mean (dots) and SE (vertical lines) where the mean and SE are taken across all frequencies for a given sample size $n$.}
    \label{fig_sim3_prec_reca}
\end{figure}

%~~~~~~~~~~~~~~~~~~~~~~~~~~~~~~~~~~~~~~~~~~~~~~~~~~~~~~~~~~~~~~~~~%
% Simulation 1 full results
%~~~~~~~~~~~~~~~~~~~~~~~~~~~~~~~~~~~~~~~~~~~~~~~~~~~~~~~~~~~~~~~~~% 

\begin{table}[t]
  \caption{Simulation 1. Results are reported as Mean (SE) where the mean and SE are taken across all frequencies for a given sample size $n$.}
  \label{tbl_sim1}
  \centering
    \begin{tabular}{ccccccc}
        \toprule
\multicolumn{2}{c}{ } & \multicolumn{5}{c}{SDD} \\
\cmidrule(l{3pt}r{3pt}){3-7}
n & \# True edges & \# Est edges & Precision & Recall & Accuracy & RRMSE \\
\midrule
100 & 21.6 (0.32) & 10.6 (1.04) & 0.46 (0.06) & 0.22 (0.03) & 1.00 (0.00) & 0.95 (0.01)\\
200 & 21.8 (0.16) & 16.7 (1.00) & 0.52 (0.04) & 0.30 (0.02) & 1.00 (0.00) & 0.89 (0.01)\\
500 & 21.9 (0.08) & 21.1 (1.15) & 0.65 (0.03) & 0.51 (0.02) & 1.00 (0.00) & 0.78 (0.01)\\
1000 & 21.9 (0.08) & 31.1 (2.03) & 0.63 (0.03) & 0.66 (0.02) & 1.00 (0.00) & 0.61 (0.01)\\
2000 & 21.9 (0.08) & 48.7 (4.60) & 0.55 (0.03) & 0.72 (0.02) & 1.00 (0.00) & 0.53 (0.01)\\
&   &   &   &   &   &  \\

\multicolumn{2}{c}{ } & \multicolumn{5}{c}{FGL} \\
\cmidrule(l{3pt}r{3pt}){3-7}
 & & 2073 (194) & 0.02 (0.00) & 0.80 (0.02) & 0.82 (0.02) & 0.89 (0.01)\\
 & & 2444 (137) & 0.02 (0.00) & 0.88 (0.01) & 0.79 (0.01) & 0.84 (0.01)\\
 &  & 4973 (326) & 0.01 (0.00) & 0.96 (0.01) & 0.58 (0.03) & 0.68 (0.01)\\
 &  & 5962 (324) & 0.01 (0.00) & 0.99 (0.00) & 0.49 (0.03) & 0.64 (0.02)\\
 &  & 6308 (327) & 0.01 (0.00) & 0.99 (0.00) & 0.46 (0.03) & 0.62 (0.02)\\
 &   &   &   &   &   &  \\

\multicolumn{2}{c}{ }& \multicolumn{5}{c}{Na\"{i}ve} \\
\cmidrule(l{3pt}r{3pt}){3-7} 
 &  & 659 (46.1) & 0.03 (0.00) & 0.62 (0.02) & 0.94 (0.00) & 0.95 (0.01)\\
 &  & 1077 (56.9) & 0.03 (0.00) & 0.79 (0.02) & 0.91 (0.00) & 0.90 (0.00)\\
 &  & 1775 (72.1) & 0.01 (0.00) & 0.91 (0.01) & 0.85 (0.01) & 0.80 (0.01)\\
 &  & 2289 (104) & 0.01 (0.00) & 0.99 (0.00) & 0.81 (0.01) & 0.69 (0.01)\\
 &  & 3158 (151) & 0.01 (0.00) & 0.99 (0.00) & 0.73 (0.01) & 0.58 (0.01)\\
 &  &   &   &   &   &  \\
 
 \multicolumn{2}{c}{ } & \multicolumn{5}{c}{Hard threshold } \\
\cmidrule(l{3pt}r{3pt}){3-7} 
 &  & - & - & - & - & -\\
 &  & 0.69 (0.05) & 0.39 (0.05) & 0.02 (0.00) & 1.00 (0.00) & 1.57 (0.06)\\
 &  & 0.79 (0.06) & 0.68 (0.05) & 0.03 (0.00) & 1.00 (0.00) & 1.00 (0.00)\\
 &  & 5.01 (1.11) & 0.69 (0.04) & 0.15 (0.03) & 1.00 (0.00) & 0.89 (0.02)\\
 &  & 230 (53.1) & 0.60 (0.05) & 0.35 (0.04) & 0.98 (0.00) & 1.04 (0.06)\\
        \bottomrule
        \end{tabular}
\end{table}

%~~~~~~~~~~~~~~~~~~~~~~~~~~~~~~~~~~~~~~~~~~~~~~~~~~~~~~~~~~~~~~~~~%
% Simulation 2 full results
%~~~~~~~~~~~~~~~~~~~~~~~~~~~~~~~~~~~~~~~~~~~~~~~~~~~~~~~~~~~~~~~~~% 
\begin{table}[t]
  \caption{Simulation 2. Results are reported as Mean (SE) where the mean and SE are taken across all frequencies for a given sample size $n$.}
  \label{tbl_sim2}
  \centering
\begin{tabular}{ccccccc}
        \toprule
\multicolumn{2}{c}{ } & \multicolumn{5}{c}{SDD} \\
\cmidrule(l{3pt}r{3pt}){3-7}
n & \# True edges & \# Est edges & Precision & Recall & Accuracy & RRMSE \\
\midrule
100 & 13.7 (0.21) & 1.52 (0.27) & 0.00 (0.00) & 0.00 (0.00) & 1.00 (0.00) & 1.00 (0.00)\\
200 & 13.9 (0.11) & 0.90 (0.13) & 0.00 (0.00) & 0.00 (0.00) & 1.00 (0.00) & 1.00 (0.00)\\
500 & 14.0 (0.04) & 2.24 (0.35) & 0.00 (0.00) & 0.00 (0.00) & 1.00 (0.00) & 1.00 (0.00)\\
1000 & 14.0 (0.04) & 3.80 (0.47) & 0.29 (0.04) & 0.09 (0.01) & 1.00 (0.00) & 0.99 (0.01)\\
2000 & 14.0 (0.04) & 6.52 (0.83) & 0.33 (0.04) & 0.14 (0.01) & 1.00 (0.00) & 0.95 (0.01)\\
&   &   &   &   &   &   \\

\multicolumn{2}{c}{ } & \multicolumn{5}{c}{FGL} \\
\cmidrule(l{3pt}r{3pt}){3-7} 
 &  & 4675 (227) & 0.00 (0.00) & 0.50 (0.02) & 0.60 (0.02) & 2.25 (0.16)\\
 &  & 5297 (185) & 0.00 (0.00) & 0.60 (0.02) & 0.55 (0.02) & 2.15 (0.12)\\
 &  & 10201 (118) & 0.00 (0.00) & 0.97 (0.01) & 0.13 (0.01) & 4.20 (0.07)\\
 &  & 9731 (155) & 0.00 (0.00) & 0.93 (0.02) & 0.17 (0.01) & 3.09 (0.06)\\
 &  & 9196 (176) & 0.00 (0.00) & 0.96 (0.01) & 0.21 (0.02) & 2.30 (0.05)\\
&   &   &   &   &   &    \\

\multicolumn{2}{c}{ } & \multicolumn{5}{c}{Na\"{i}ve } \\
\cmidrule(l{3pt}r{3pt}){3-7} 
 & & 259 (10.9) & 0.02 (0.00) & 0.42 (0.01) & 0.98 (0.00) & 1.02 (0.01)\\
 & & 609 (46.3) & 0.02 (0.00) & 0.43 (0.00) & 0.95 (0.00) & 1.01 (0.01)\\
 & & 3566 (56.3) & 0.00 (0.00) & 0.54 (0.01) & 0.69 (0.00) & 1.20 (0.03)\\
 & & 6487 (58.2) & 0.00 (0.00) & 0.79 (0.02) & 0.44 (0.00) & 1.58 (0.04)\\
 & & 9145 (98.2) & 0.00 (0.00) & 0.95 (0.01) & 0.22 (0.01) & 2.29 (0.04)\\
 &   &   &   &   &   &  \\
 
 \multicolumn{2}{c}{ } & \multicolumn{5}{c}{Hard threshold } \\
\cmidrule(l{3pt}r{3pt}){3-7} 
 & & - & - & - & - & -\\
 &  & 0.72 (0.05) & 0.06 (0.02) & 0.00 (0.00) & 1.00 (0.00) & 3.05 (0.18)\\
 &  & 0.93 (0.08) & 0.46 (0.05) & 0.04 (0.01) & 1.00 (0.00) & 1.09 (0.02)\\
 &  & 0.93 (0.06) & 0.78 (0.04) & 0.06 (0.00) & 1.00 (0.00) & 0.94 (0.01)\\
 &  & 1.08 (0.06) & 0.84 (0.04) & 0.08 (0.00) & 1.00 (0.00) & 0.89 (0.01)\\
        \bottomrule
\end{tabular}
\end{table}
%~~~~~~~~~~~~~~~~~~~~~~~~~~~~~~~~~~~~~~~~~~~~~~~~~~~~~~~~~~~~~~~~~%
% Simulation 3 full results
%~~~~~~~~~~~~~~~~~~~~~~~~~~~~~~~~~~~~~~~~~~~~~~~~~~~~~~~~~~~~~~~~~% 

\begin{table}[ht!]
  \caption{Simulation 3. Results are reported as Mean (SE) where the mean and SE are taken across all frequencies for a given sample size $n$.}
  \label{tbl_sim3}
  \centering
   \begin{tabular}{ccccccc}
        \toprule
        \multicolumn{2}{c}{ } & \multicolumn{5}{c}{SDD} \\
\cmidrule(l{3pt}r{3pt}){3-7}
n & \# True edges & \# Est edges & Precision & Recall & Accuracy & RRMSE \\
\midrule
100 & 27.4 (0.39) & 1.76 (0.12) & 0.82 (0.05) & 0.06 (0.00) & 1.00 (0.00) & 0.99 (0.00)\\
200 & 27.7 (0.20) & 4.80 (0.31) & 0.73 (0.04) & 0.14 (0.01) & 1.00 (0.00) & 0.89 (0.01)\\
500 & 27.9 (0.12) & 6.68 (0.56) & 0.64 (0.04) & 0.16 (0.01) & 1.00 (0.00) & 0.87 (0.01)\\
1000 & 27.9 (0.12) & 8.86 (0.71) & 0.82 (0.03) & 0.29 (0.02) & 1.00 (0.00) & 0.80 (0.02)\\
2000 & 27.9 (0.12) & 22.1 (0.77) & 0.83 (0.01) & 0.65 (0.02) & 1.00 (0.00) & 0.64 (0.01)\\
&   &   &   &   &   &   \\

\multicolumn{2}{c}{ } & \multicolumn{5}{c}{FGL} \\
\cmidrule(l{3pt}r{3pt}){3-7}
 &  & 2950 (157) & 0.01 (0.00) & 0.82 (0.03) & 0.75 (0.01) & 1.00 (0.02)\\
 &  & 2984 (97.0) & 0.01 (0.00) & 0.90 (0.02) & 0.75 (0.01) & 0.90 (0.01)\\
 &  & 7718 (254) & 0.00 (0.00) & 0.97 (0.01) & 0.34 (0.02) & 1.14 (0.03)\\
 &  & 8780 (137) & 0.00 (0.00) & 0.99 (0.00) & 0.25 (0.01) & 0.99 (0.02)\\
 &  & 7975 (156) & 0.00 (0.00) & 0.99 (0.01) & 0.32 (0.01) & 0.72 (0.01)\\
&   &   &   &   &   &  \\

\multicolumn{2}{c}{ }  & \multicolumn{5}{c}{Na\"{i}ve } \\
\cmidrule(l{3pt}r{3pt}){3-7}
 &  & 308 (12.7) & 0.02 (0.00) & 0.26 (0.01) & 0.97 (0.00) & 0.94 (0.01)\\
 &  & 563 (20.2) & 0.02 (0.00) & 0.36 (0.01) & 0.95 (0.00) & 0.91 (0.01)\\
 &  & 2331 (34.6) & 0.01 (0.00) & 0.88 (0.02) & 0.80 (0.00) & 0.81 (0.01)\\
 &  & 3532 (49.8) & 0.01 (0.00) & 0.93 (0.01) & 0.70 (0.00) & 0.71 (0.01)\\
 &  & 5030 (61.8) & 0.01 (0.00) & 0.97 (0.01) & 0.57 (0.01) & 0.61 (0.01)\\
  &   &   &   &   &   &  \\
  
 \multicolumn{2}{c}{ } & \multicolumn{5}{c}{Hard threshold } \\
\cmidrule(l{3pt}r{3pt}){3-7}
 & & - & - & - & - & -\\
 &  & 0.58 (0.05) & 0.29 (0.05) & 0.01 (0.00) & 1.00 (0.00) & 1.74 (0.09)\\
 &  & 0.72 (0.06) & 0.65 (0.05) & 0.03 (0.00) & 1.00 (0.00) & 1.01 (0.00)\\
 &  & 0.81 (0.07) & 0.66 (0.05) & 0.03 (0.00) & 1.00 (0.00) & 0.97 (0.00)\\
 &  & 2.67 (1.78) & 0.65 (0.05) & 0.04 (0.01) & 1.00 (0.00) & 0.96 (0.00)\\
        \bottomrule
        \end{tabular}
\end{table}

%~~~~~~~~~~~~~~~~~~~~~~~~~~~~~~~~~~~~~~~~~~~~~~~~~~~~~~~~~~~~~~~~~%
% Simulation 1 - tables for diff between SDD and other methods
%~~~~~~~~~~~~~~~~~~~~~~~~~~~~~~~~~~~~~~~~~~~~~~~~~~~~~~~~~~~~~~~~~%

\begin{table}[ht!]
\caption{Simulation 1. Difference Between SDD and FGL for Each Metric.}
\label{apdx_tbl_sim1_diff_sdd_naive}
\centering
\begin{tabular}{cccccc}
        \toprule
        n & \# Est edges & Precision & Recall & Accuracy & RRMSE\\
        \midrule
        100 & -2062 (194) & 0.44 (0.06) & -0.58 (0.03) & 0.17 (0.02) & 0.06 (0.01)\\
        200 & -2428 (137) & 0.50 (0.04) & -0.58 (0.02) & 0.21 (0.01) & 0.06 (0.02)\\
        500 & -4952 (326) & 0.64 (0.03) & -0.45 (0.02) & 0.42 (0.03) & 0.10 (0.02)\\
        1000 & -5931 (325) & 0.63 (0.03) & -0.33 (0.02) & 0.51 (0.03) & -0.03 (0.03)\\
        2000 & -6259 (329) & 0.55 (0.03) & -0.27 (0.02) & 0.54 (0.03) & -0.08 (0.03)\\
        \bottomrule
\end{tabular}
\end{table}

\begin{table}[ht!]
\caption{Simulation 1. Difference Between SDD and Na\"{i}ve Method for Each Metric.}
\label{apdx_tbl_sim1_diff_sdd_hard}
\centering
\begin{tabular}{cccccc}
        \toprule
        n & \# Est edges & Precision & Recall & Accuracy & RRMSE\\
        \midrule
        100 & -649 (46.0) & 0.43 (0.06) & -0.40 (0.03) & 0.05 (0.00) & 0.00 (0.00)\\
        200 & -1060 (56.4) & 0.50 (0.04) & -0.50 (0.01) & 0.09 (0.00) & -0.01 (0.01)\\
        500 & -1754 (71.9) & 0.64 (0.03) & -0.40 (0.02) & 0.15 (0.01) & -0.02 (0.01)\\
        1000 & -2258 (103) & 0.62 (0.03) & -0.33 (0.02) & 0.19 (0.01) & -0.07 (0.01)\\
        2000 & -3109 (149) & 0.55 (0.03) & -0.27 (0.02) & 0.27 (0.01) & -0.04 (0.01)\\
        \bottomrule
\end{tabular}
\end{table}

\begin{table}[ht!]
\caption{Simulation 1. Difference Between SDD and Hard Thresholding for Each Metric.}
\label{apdx_tbl_sim1_diff_sdd_soft}
\centering
\begin{tabular}{cccccc}
        \toprule
        n & \# Est edges & Precision & Recall & Accuracy & RRMSE\\
        \midrule
        100 & - & - & - & - & -\\
        200 & 16.0 (1.00) & 0.13 (0.07) & 0.28 (0.02) & -0.00 (0.00) & -0.67 (0.06)\\
        500 & 20.3 (1.15) & -0.03 (0.06) & 0.48 (0.02) & 0.00 (0.00) & -0.22 (0.01)\\
        1000 & 26.1 (1.86) & -0.06 (0.05) & 0.51 (0.04) & -0.00 (0.00) & -0.27 (0.03)\\
        2000 & -182 (52.5) & -0.05 (0.05) & 0.37 (0.05) & 0.02 (0.00) & -0.50 (0.07)\\
        \bottomrule
\end{tabular}
\end{table}

%~~~~~~~~~~~~~~~~~~~~~~~~~~~~~~~~~~~~~~~~~~~~~~~~~~~~~~~~~~~~~~~~~%
% Simulation 2 - tables for diff between SDD and other methods
%~~~~~~~~~~~~~~~~~~~~~~~~~~~~~~~~~~~~~~~~~~~~~~~~~~~~~~~~~~~~~~~~~%
\begin{table}[ht!]
\caption{Simulation 2. Difference Between SDD and FGL for Each Metric.}
\label{apdx_tbl_sim2_diff_sdd_naive}
\centering
\begin{tabular}{cccccc}
        \toprule
        n & \# Est edges & Precision & Recall & Accuracy & RRMSE\\
        \midrule
100 & -4674 (227) & -0.00 (0.00) & -0.50 (0.02) & 0.40 (0.02) & -1.25 (0.16)\\
200 & -5296 (185) & -0.00 (0.00) & -0.60 (0.02) & 0.45 (0.02) & -1.15 (0.12)\\
500 & -10199 (119) & -0.00 (0.00) & -0.97 (0.01) & 0.87 (0.01) & -3.20 (0.07)\\
1000 & -9727 (155) & 0.29 (0.04) & -0.84 (0.02) & 0.83 (0.01) & -2.10 (0.06)\\
2000 & -9189 (177) & 0.33 (0.04) & -0.82 (0.02) & 0.79 (0.02) & -1.35 (0.05)\\
        \bottomrule
\end{tabular}
\end{table}

\begin{table}[ht!]
\caption{Simulation 2. Difference Between SDD and Na\"{i}ve Method for Each Metric.}
\label{apdx_tbl_sim2_diff_sdd_hard}
\centering
\begin{tabular}{cccccc}
        \toprule
        n & \# Est edges & Precision & Recall & Accuracy & RRMSE\\
        \midrule
100 & -257 (10.7) & -0.02 (0.00) & -0.42 (0.01) & 0.02 (0.00) & -0.02 (0.01)\\
200 & -608 (46.2) & -0.02 (0.00) & -0.43 (0.00) & 0.05 (0.00) & -0.01 (0.01)\\
500 & -3564 (56.3) & -0.00 (0.00) & -0.54 (0.01) & 0.30 (0.00) & -0.20 (0.03)\\
1000 & -6483 (58.4) & 0.29 (0.04) & -0.70 (0.02) & 0.55 (0.00) & -0.59 (0.04)\\
2000 & -9139 (98.6) & 0.33 (0.04) & -0.82 (0.02) & 0.78 (0.01) & -1.34 (0.04)\\
        \bottomrule
\end{tabular}
\end{table}

\begin{table}[ht!]
\caption{Simulation 2. Difference Between SDD and Hard Thresholding for Each Metric.}
\label{apdx_tbl_sim2_diff_sdd_soft}
\centering
\begin{tabular}{cccccc}
        \toprule
        n & \# Est edges & Precision & Recall & Accuracy & RRMSE\\
        \midrule
100 & - & - & - & - & -\\
200 & 0.18 (0.14) & -0.06 (0.02) & -0.00 (0.00) & -0.00 (0.00) & -2.05 (0.18)\\
500 & 1.31 (0.36) & -0.46 (0.05) & -0.04 (0.01) & -0.00 (0.00) & -0.09 (0.02)\\
1000 & 2.87 (0.48) & -0.49 (0.06) & 0.02 (0.01) & -0.00 (0.00) & 0.05 (0.01)\\
2000 & 5.44 (0.84) & -0.51 (0.06) & 0.06 (0.01) & -0.00 (0.00) & 0.05 (0.01)\\
        \bottomrule
\end{tabular}
\end{table}

%~~~~~~~~~~~~~~~~~~~~~~~~~~~~~~~~~~~~~~~~~~~~~~~~~~~~~~~~~~~~~~~~~%
% Simulation 3 - tables for diff between SDD and other methods
%~~~~~~~~~~~~~~~~~~~~~~~~~~~~~~~~~~~~~~~~~~~~~~~~~~~~~~~~~~~~~~~~~%

\begin{table}[ht!]
\caption{Simulation 3. Difference Between SDD and FGL for Each Metric.}
\label{apdx_tbl_sim3_diff_sdd_naive}
\centering
\begin{tabular}{cccccc}
        \toprule
        n & \# Est edges & Precision & Recall & Accuracy & RRMSE\\
        \midrule
100 & -2948 (157) & 0.81 (0.05) & -0.76 (0.03) & 0.25 (0.01) & -0.01 (0.01)\\
200 & -2979 (96.9) & 0.72 (0.04) & -0.75 (0.02) & 0.25 (0.01) & -0.01 (0.02)\\
500 & -7711 (254) & 0.63 (0.04) & -0.81 (0.02) & 0.66 (0.02) & -0.27 (0.03)\\
1000 & -8771 (137) & 0.82 (0.03) & -0.71 (0.02) & 0.75 (0.01) & -0.19 (0.02)\\
2000 & -7953 (155) & 0.83 (0.01) & -0.34 (0.02) & 0.68 (0.01) & -0.08 (0.02)\\
        \bottomrule
\end{tabular}
\end{table}

\begin{table}[ht!]
\caption{Simulation 3. Difference Between SDD and Na\"{i}ve Method for Each Metric.}
\label{apdx_tbl_sim3_diff_sdd_hard}
\centering
\begin{tabular}{cccccc}
        \toprule
        n & \# Est edges & Precision & Recall & Accuracy & RRMSE\\
        \midrule
100 & -306 (12.7) & 0.80 (0.05) & -0.20 (0.01) & 0.03 (0.00) & 0.04 (0.00)\\
200 & -558 (20.0) & 0.71 (0.04) & -0.21 (0.01) & 0.05 (0.00) & -0.02 (0.00)\\
500 & -2324 (34.7) & 0.63 (0.04) & -0.72 (0.02) & 0.20 (0.00) & 0.07 (0.01)\\
1000 & -3523 (49.6) & 0.81 (0.03) & -0.65 (0.02) & 0.30 (0.00) & 0.09 (0.01)\\
2000 & -5008 (61.8) & 0.82 (0.01) & -0.33 (0.02) & 0.43 (0.01) & 0.03 (0.01)\\
        \bottomrule
\end{tabular}
\end{table}

\begin{table}[ht!]
\caption{Simulation 3. Difference Between SDD and Hard Thresholding Method for Each Metric.}
\label{apdx_tbl_sim3_diff_sdd_soft}
\centering
\begin{tabular}{cccccc}
        \toprule
        n & \# Est edges & Precision & Recall & Accuracy & RRMSE\\
        \midrule
100 & - & - & - & - & -\\
200 & 4.22 (0.31) & 0.44 (0.06) & 0.13 (0.01) & 0.00 (0.00) & -0.85 (0.09)\\
500 & 5.96 (0.56) & -0.01 (0.06) & 0.14 (0.01) & 0.00 (0.00) & -0.14 (0.01)\\
1000 & 8.05 (0.72) & 0.16 (0.06) & 0.26 (0.02) & 0.00 (0.00) & -0.17 (0.02)\\
2000 & 19.4 (1.87) & 0.18 (0.05) & 0.61 (0.02) & 0.00 (0.00) & -0.32 (0.02)\\
        \bottomrule
\end{tabular}
\end{table}

\end{document}